\theoremstyle{plain}
\newtheorem{theorem}{Theorem}
\newtheorem{corollary}{Corollary}
\newtheorem{lemma}{Lemma}
\theoremstyle{definition}
\theoremstyle{remark}
\def\eps{\epsilon}
\title{Truthfulness, Proportional Fairness, and Efficiency\footnote{This work was supported in part by NSF grant CCF0830516.}}
\author{ Richard Cole
         \thanks{{\tt cole@cims.nyu.edu}. Courant Institute, New York University, N.Y.}
        \and
         Vasilis Gkatzelis 
         \thanks{{\tt gkatz@cims.nyu.edu}. Courant Institute, New York University, N.Y.}
        \and
         Gagan Goel
        \thanks{{\tt gagangoel@google.com}. Google Research, New York, N.Y.}
    }
\begin{document}
\maketitle

\begin{abstract}
How does one allocate a collection of resources to a set of {\em strategic}
agents in a fair and efficient manner without using money?
For in many scenarios it is not feasible to use money to compensate agents
for otherwise unsatisfactory outcomes.
This paper studies this question, looking at both fairness and
efficiency measures.

\begin{itemize}
\item
We employ the {\em proportionally fair} solution, which is a
well-known fairness concept for money-free settings.
But although finding a proportionally fair solution is computationally
tractable, it cannot be implemented in a truthful fashion.
Consequently, we seek approximate solutions.
We give several truthful mechanisms which achieve proportional
fairness in an approximate sense. We use a strong notion of
approximation, requiring
the mechanism to give each agent a good approximation of its proportionally fair
utility. In particular, one of our mechanisms provides a better and
better approximation
factor as the minimum demand for every good increases.
A motivating example is provided by the massive privatization auction in the
Czech republic in the early 90s.
\item
With regard to efficiency, prior work has shown a lower bound of 0.5
on the approximation
factor of any swap-dictatorial mechanism approximating a social welfare measure
even for the two agents and multiple goods case.
We surpass this lower bound by designing a non-swap-dictatorial
mechanism for this case.
Interestingly, the new mechanism builds on the notion of proportional fairness.

\end{itemize}
\end{abstract}

\thispagestyle{empty}
\newpage	
\setcounter{page}{1}
\section{Introduction}
How does one allocate a collection of resources to a set of {\em strategic}
agents in a fair and efficient manner without using money? This is a
fundemental problem with many applications since in many scenarios payments
cannot be solicited from agents; for instance, different teams compete for
a set of shared resources in a firm, and the firm cannot solicit payments
from the teams to make allocation decisions. Despite the applicability of
mechanism design without money, much of the work in this area relies on
enforcing payments. Motivated by this, we study the above question in this
paper. We look at both fairness and efficiency measures.

One practical issue that arises in mechanism design without money is to put the utilities of the agents on a common scale.
For example, happiness could mean different things to different people and cannot be
compared as such. When payments can be used, a standard approach is to
measure the utilites in terms of \emph{money}.
In the absence of money, one way to overcome this difficulty is to look for {\em
scale-free} solutions, i.e., if an agent scales her valuations up or down,
the solution should remain the same. When maximizing social welfare (SW), one can achieve this by first normalizing
the values of the agents so that they add up to a common number (1 say) and then maximizing
the welfare with these normalized values. For the case of fairness,
the {\em proportionally fair} solution is a well-known fairness concept that is
scale-free. In brief, a proportionally fair (PF) solution is a Pareto
optimal solution $\cal O$ which compares favorably to any other Pareto
optimal solution ${\cal O}'$ in the sense that when switching from ${\cal O}'$
to $\cal O$, percentage gains outweigh percentage losses.
The PF solution was first proposed in the TCP literature and is used
widely in many practical scenarios~\cite{Kelly-refs}.

Maximizing social welfare with normalized values using truthful
mechanisms was first studied by Guo and Conitzer~\cite{GC10}. 
They studied the mechanism design for the
special case of two players and many items to better understand the
structure of good mechanisms maximizing social welfare. Even in
this special case, they showed that the problem is difficult
by proving that no mechanism from a general class called {\em increasing-price} 
mechanisms that use artificial currency can yield better than $0.5$ approximation;
they left it as an open question to overcome this bound. 
Later on, Han et al.~\cite{HanSTZ11} showed the same negative result of $0.5$ for an even more general 
class of mechanisms called {\em swap-dictatorial}\footnote{This class contains all mechanisms that (randomly) 
choose one of the two bidders and allow her to choose her preferred bundle of items from a predefined set; the other
bidder receives the remaining items. For further discussion on swap-dictatorial mechanisms and their importance in 
money-free mechanism design, see \cite{GC10} and references therein.}
mechanisms. 
In this work, we break this bound of $0.5$. Our main contribution here
is to show a connection between the PF and SW allocations;  then we exploit this connection to give
an interesting non-swap-dictatorial mechanism which overcomes the $0.5$ bound.

We then focus on the design of truthful mechanisms for achieving
proportional fairness. Even for simple instances involving just two
agents and two items, it
is not difficult to show that no truthful mechanism can obtain a PF
solution in general. Hence one has to look for approximate solutions.
In this work, we ask for a strong notion of approximation requiring
the mechanism to give each agent a good approximation of its true
proportionally fair utility.

Also, we note that the PF solution is related to the market
equilibrium prices when agents are assumed to each have a \emph{unit budget}
of some artificial currency. In this case, the PF allocation is the
same as
the allocation at market equilibrium prices, and is captured via
the Eisenberg-Gale program~\cite{EisenbergG59,DevanurPSV02,JV07}. This connection of the PF
solution to market equilibrium prices is of independent interest for
the design of truthful mechanisms in this setting.

A specific real world example of the use of artificial currency for
achieving or approximating fair
outcomes, which motivates our final result, is that of the
privatization auctions
that took place in Czechoslovakia. 
%
In the early 90s, the Czech government sought to privatize the state
owned firms dating
from the then recently ended communist era. The government's goal was
two-fold --- first to distribute shares of these companies to their
citizens in a fair manner and to calculate the market prices of these
companies so that the shares could be traded in the open market after
the initial allocation. To this end, they ran an auction, as described
in~\cite{AH2000}.
Citizens could choose to participate by buying 1000 vouchers at a cost
of 1,000 Czech Crowns, about \$35, a fifth of the average monthly salary.
Over 90\% of those eligible participated.
These vouchers were then used to bid for shares in the available 1,491 firms.
We believe that the PF allocation provides a very appropriate solution
for this example, both to calculate a fair allocation and to compute
market prices.
Our mechanism solves the problem of efficiently discovering approximately PF allocations
in a truthful fashion for such natural scenarios where there is high demand for each resource.

\paragraph{Our results.}
We start in Section~\ref{sec:SWapprox} with the goal of (approximately) maximizing the social welfare
using truthful mechanisms. 
Previous work showed that no non-trivial approximation factors can
be achieved by any truthful mechanism for unbounded numbers of items and bidders~\cite{HanSTZ11}.
This work also showed that no swap-dictatorial mechanism can yield an approximation
factor better than the trivial $0.5$ for the two-bidder case. 
Accordingly, the authors asked whether a
truthful mechanism breaking this bound of $0.5$ exists.
We give a positive answer  
by designing a 
truthful non-swap-dictatorial mechanism closely related to the notion of PF, and proving that it
achieves an approximation factor of at least $0.622$.


In Section~\ref{sec:PFapprox}, we consider the objective of (approximately) maximizing the fraction of her
PF utility that each bidder receives.
Subsection~\ref{sec:2-items} studies the set of instances involving just two items, for which we present
a truthful mechanism which guarantees that all bidders will receive exactly the same
fraction of their PF utility; this fraction is always at least $\frac{n}{n+1}$,
where $n$ is the number of bidders. For small values of $n$, we also describe more elaborate
mechanisms achieving better approximation ratios: $0.828$ instead of $2/3$ for $n=2$ and $0.775$
instead of $3/4$ for $n=3$, both of which appear in Appendix~\ref{improved_twoItem}.
Then, in Subsection~\ref{sec:many-many},
we address the general setting with arbitrary numbers of bidders and items.
We present a truthful mechanism that performs 
increasingly well as the PF prices\footnote{The prices induced by the market equilibrium when all bidders have a unit of budget.} increase.
More specifically, if $p_j^*$ is the PF price of item $j$, then the approximation factor guaranteed
by this mechanism is equal to $\min_j \left(p_j^*/\left\lceil p_j^* \right\rceil\right)$.
It is interesting to note that scenarios such as the privatization auction
mentioned above involve a number of bidders much larger than the number of items;
as a rule, we expect this to lead to high prices and a very good approximation of the participants' PF utilities.

\paragraph{Related Work.}
Our setting is closely related to the large topic of fair division or cake-cutting~\cite{BramsT96,RobertsonWebb98},
which has been studied since the 1940's, using the $[0,1]$ interval as the standard representation
of a cake. Each agent's preferences take the form of a valuation function over this interval and
the valuations of unions of subintervals are additive. The multiple divisible items setting that we focus on is
equivalent to restricting the agents' valuations to piecewise constant functions over the $[0,1]$ interval.
Some of the most common notions of fairness that have been studied in this literature are,
proportionality, envy-freeness, and equitability~\cite{BramsT96,RobertsonWebb98}.\footnote{It is worth distinguishing 
the notion of PF from that of proportionality by noting that the latter is a much weaker notion, directly implied by the former.
Also, note that all the mechanisms that we propose yield envy-free outcomes.}

Despite the extensive work on fair resource allocation, truthfulness considerations have not played a major
role in this literature. Most results related to truthfulness were weakened by the assumption that each agent would be
truthful in reporting its valuations unless this strategy was dominated.
Very recent work~\cite{CLPP10,MosselT10,ZivanDOS10} studies truthful cake cutting variations using the
standard notion of truthfulness according to which an agent may not be truthful unless doing so is a dominant strategy.
Chen et al.~\cite{CLPP10} study truthful cake-cutting with agents having piecewise uniform valuations and
they provide a polynomial-time mechanism that is truthful, proportional, and envy-free. They also design randomized
mechanisms for more general families of valuation functions, while Mossel and Tamuz~\cite{MosselT10} prove the existence
of truthful (in expectation) mechanisms satisfying proportionality in expectation for general valuations.
Zivan et al.~\cite{ZivanDOS10} aim to achieve envy-free Pareto optimal
allocations of multiple divisible goods while reducing, but not eliminating, the agents' incentives to lie.
The extent to which untruthfulness is reduced by their proposed mechanism is only evaluated empirically and
depends critically on their assumption that the resource limitations are soft constraints.
Our work is closely related to the recent papers of Guo and Conitzer~\cite{GC10} and of Han et al.~\cite{HanSTZ11},
who also consider the truthful allocation of multiple divisible goods; their goal is to maximize the social welfare
(We provide more details in Section~\ref{sec:SWapprox}).

Most of the papers mentioned above contribute to our understanding of the trade-offs between either truthfulness
and fairness, or truthfulness and social welfare. Another direction that has been actively pursued is to
understand and quantify the interplay between fairness and social welfare. Caragiannis et al.~\cite{CKKK12}
measured the deterioration of the social welfare caused due to different fairness restrictions, the price of
fairness. More recently, Cohler et al.~\cite{CLPP11} designed algorithms for computing allocations that (approximately)
maximize social welfare while satisfying envy-freeness.


Our results fit into the general agenda of approximate mechanism design without money, explicitly initiated by
Procaccia et al.~\cite{ProcacciaT09}. The use of artificial currency aiming to achieve truthfulness and fairness
in such scenarios, and the notion of \emph{competitive equilibrium with equal incomes} was recently revisited by
Budish~\cite{B10} and Budish et al.~\cite{OSB10}.

Finally, the work of Zhang~\cite{Zhang05}, despite the fact that it does not explicitly refer to
the use of artificial currency, enforces that all the participants' budgets be spent, thus
moving to an equivalent scenario. Unlike our work, they focus on evaluating the equilibrium
points of the resource allocation game rather than on designing truthful mechanisms. 

\section{Preliminaries}\label{sec:prelim}
Let $M$ denote the set of $m$ items and $N$ the set of $n$ bidders.
Each bidder $i\in N$ has a valuation $v_{ij}$ for each item $j\in M$ and each
item is divisible, meaning that it can be cut into arbitrarily small pieces and then allocated
to different bidders. The bidder valuations are scaled so that $\sum_j{v_{ij}}=1$ for each
bidder $i$.

Let $x$ be an allocation of the
resources among the bidders and $v_i(x)$ be the valuation of bidder $i$ for this
allocation. Such an allocation is \emph{Proportionally Fair} (PF) if it is feasible, and for any other
feasible allocation $x'$ the aggregate proportional change to the valuations is not positive, i.e.:
\begin{equation*}
\sum_{i\in N}{\frac{v_i(x')-v_i(x)}{v_i(x)}}\leq 0.
\end{equation*}
We focus on problems where every bidder has additive linear valuations; this means that, if a bidder
is allocated a fraction $f_j$ of each item $j$ that she values at $v_j$, then her valuation for that allocation
will be $\sum_j f_j v_j$.
For such additive linear valuations we can compute a PF allocation in polynomial time by assuming that each bidder
has a \emph{unit budget} of some artificial currency and then using the Eisenberg-Gale convex
program to compute the market equilibrium of the induced market~\cite{DevanurPSV02,JV07}.
This outcome need not be unique, but it provides unique item prices and 
achieved bidder valuations~\cite{EisenbergG59};
we will refer to these induced item prices as PF prices.

Given a valuation bid vector from each bidder (one bid for each item), we want to design mechanisms that output
an allocation of items to bidders. We restrict ourselves to truthful mechanisms, i.e.\ mechanisms
such that any false bid from a bidder will never return her a more valuable allocation.
In designing such mechanisms we consider two objectives:

\begin{itemize}
\item The \emph{SW objective}, which aims to output an allocation $x$ (approximately) maximizing the social welfare, denoted $SW(x) = \sum_i v_i(x)$.
\item The \emph{PF objective}, which aims to output an allocation $x$ (approximately) maximizing the value of
$\rho(x) = \min_{i\in N}\left( \frac{v_i(x)}{v_i(x_{PF})} \right)$, where $x_{PF}$ denotes the PF allocation.
\end{itemize} 

Since maximizing these objectives via truthful mechanisms is infeasible in our setting, we will measure the performance
of our mechanisms based on the extent to which they approximate them. More specifically, when referring to an
approximation factor for the SW objective, this will be the minimum value of the ratio $SW(x)/SW(x^*)$ across all
the relevant problem instances, where $x$ is the output of the mechanism and $x^*$ is the allocation that maximizes SW. For the PF
objective the approximation factor will be the minimum value of $\rho(x)$ across all relevant instances.
\section{Social Welfare Approximation}\label{sec:SWapprox}

The problem of maximizing the SW objective in this model was first approached by Guo and Conitzer~\cite{GC10},
who focused on the interesting case of two-bidder instances, which often draws the attention of researchers
studying efficient or fair allocation. They first showed that no truthful mechanism can
achieve better than a $0.841$ approximation of the optimal social welfare, even for two items.
They then studied a subclass of swap-dictatorial mechanisms and showed that no mechanism from that subclass can
achieve better than a $0.5$ approximation of the optimal social
welfare when $m$, the number of items, is unbounded.
Subsequent work of Han et al.~\cite{HanSTZ11} extended these negative results,
showing that no swap-dictatorial mechanism can achieve better than a $0.5$ approximation for the two-bidder case, and that for the general
setting with $n>2$, no truthful mechanism can achieve better than the trivial $1/m$ approximation.
The main open question that remains in this setting is whether interesting truthful mechanisms for the two-bidder case
exist beyond the class of swap-dictatorial mechanisms and whether such mechanisms can achieve an approximation factor better than $0.5$.

We provide a positive answer to both open questions using the notion of PF as a tool.
We first show that, for two-bidder instances, PF allocations approximate the SW objective very efficiently,
even in the worst case. We then give two truthful mechanisms, one of which is inherently connected
to the PF allocation, and show that combining  the two gives a $0.622$ approximation of the SW objective,
beating the lower bound of $0.5$ that was shown in~\cite{HanSTZ11} for all swap-dictatorial mechanisms.

\subsection{The Social Welfare of Proportionally Fair Allocations}
We start by showing that for two agents $A, B,$ and multiple items, the social welfare of the PF allocation $x_{PF}$ is a
very good approximation of the social welfare achieved by $x^*$, the optimal allocation. Specifically:

\begin{theorem}\label{thm:PFeff}
For two agents and multiple items, $\frac{SW(x_{PF})}{SW(x^*)} \geq \frac{2\sqrt{3}+3}{4\sqrt{3}} \approx 0.933$.
\end{theorem}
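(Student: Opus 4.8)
The plan is to exploit the characterization of the PF allocation as a Fisher-market equilibrium with unit budgets (supplied by the preliminaries) and to reduce the whole statement to a two-variable optimization over the equilibrium utilities. Write $a_j = v_{Aj}$, $b_j = v_{Bj}$, let $p_j$ be the PF prices, and recall that at equilibrium each agent spends her unit budget only on maximum value-per-price items. Setting $\lambda_A = \max_j a_j/p_j$ and $\lambda_B = \max_j b_j/p_j$, this means $a_j \le \lambda_A p_j$ and $b_j \le \lambda_B p_j$ for every item, with equality on the items that agent actually buys. Since each budget is $1$, each agent spends it at the constant ratio $\lambda_A$ (resp.\ $\lambda_B$), so $v_A(x_{PF}) = \lambda_A$, $v_B(x_{PF}) = \lambda_B$, and hence $SW(x_{PF}) = \lambda_A + \lambda_B$. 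Assume without loss of generality that $\lambda_A \ge \lambda_B$.

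First I would pin down the feasible range of $(\lambda_A,\lambda_B)$. Treat each item as fully bought by one agent, splitting a shared item into price-weighted virtual copies, so that the items partition into $A$'s set (of total price $1$) and $B$'s set (of total price $1$). On $A$'s items $b_j \le \lambda_B p_j$, so $A$'s items carry $B$-value at most $\lambda_B$; since $B$'s total value is $1$ and her own items contribute $\lambda_B$, this forces $1 - \lambda_B \le \lambda_B$, i.e.\ $\lambda_B \ge 1/2$. Thus $1/2 \le \lambda_B \le \lambda_A \le 1$. Next I would express the optimum cleanly: on $A$'s items $a_j = \lambda_A p_j \ge \lambda_B p_j \ge b_j$, so $x^*$ and $x_{PF}$ agree there, and the only disagreement is on $B$'s items where $a_j > b_j$. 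Writing $\max(a_j,b_j) = b_j + (a_j - b_j)^+$ on $B$'s side yields
\[ SW(x^*) = \lambda_A + \lambda_B + D, \qquad D := \sum_{j \in B\text{'s items}} (a_j - \lambda_B p_j)^+ . \]
The crux is bounding $D$. Using $a_j \le \lambda_A p_j$ on each such item gives $a_j - \lambda_B p_j \le a_j(\lambda_A-\lambda_B)/\lambda_A$, and summing over $B$'s items, whose $a$-values total $1-\lambda_A$ (because $A$'s items carry $a$-value exactly $\lambda_A$), gives $D \le (\lambda_A - \lambda_B)(1-\lambda_A)/\lambda_A$.

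Finally I would optimize. Substituting the bound, the ratio is at least $(\lambda_A+\lambda_B)/\bigl(\lambda_A+\lambda_B + (\lambda_A-\lambda_B)(1-\lambda_A)/\lambda_A\bigr)$, which I minimize over the box $1/2 \le \lambda_B \le \lambda_A \le 1$. Minimizing the ratio is the same as maximizing $D/(\lambda_A+\lambda_B)$; for fixed $\lambda_A$ this is decreasing in $\lambda_B$, so the worst case sets $\lambda_B = 1/2$. A one-variable calculus step then gives the stationary condition $8\lambda_A^2 - 4\lambda_A - 1 = 0$, i.e.\ $\lambda_A = (1+\sqrt3)/4$. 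Plugging in makes the denominator $3 - \sqrt3$ and the ratio $\frac{2+\sqrt3}{4} = \frac{2\sqrt3+3}{4\sqrt3} \approx 0.933$, matching the claim.

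The main obstacle is the bound on $D$: the rest is bookkeeping around the equilibrium conditions, but the tight constant requires both a one-sided accounting of the discrepancy (so that nothing on $A$'s side is double-counted) and the observation that $B$'s items carry exactly $1-\lambda_A$ of $A$'s value. The latter is precisely what sharpens the crude estimate $D \le \lambda_A - \lambda_B$, which yields only $0.75$, into the tight $(\lambda_A-\lambda_B)(1-\lambda_A)/\lambda_A$. I would also double-check that the split-item reduction does not distort any of the aggregate sums used above, since the argument leans on the exact totals $\sum p_j = 2$ and $\sum_j a_j = \sum_j b_j = 1$.
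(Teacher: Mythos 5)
Your proof is correct, but it takes a genuinely different route from the paper's. The paper argues combinatorially: it orders the items by the ratio $v_{Aj}/v_{Bj}$, observes that every Pareto-efficient allocation (hence both $x_{PF}$ and $x^*$) is a ``frontier'' in this ordering, partitions the items into the three groups these two frontiers create, sets $k = v_{A2}/v_{B2}$, and extracts the bounds $v_{A1}\ge k/2$ and $v_{B3}\ge (k-1)/k$ directly from the PF definition before optimizing in $(k, v_{B2})$, with the worst case at $k=\frac{1+\sqrt3}{2}$. You instead invoke the Fisher-market characterization from the preliminaries and work entirely with the equilibrium bang-per-buck rates, and I have checked your chain: $SW(x_{PF})=\lambda_A+\lambda_B$ and $\lambda_B\ge 1/2$ are right; the exact accounting $\sum_{A\text{'s items}} a_j = \lambda_A\sum_{A\text{'s items}} p_j = \lambda_A$ together with $p_j \ge a_j/\lambda_A$ on $B$'s side correctly gives $D \le (\lambda_A-\lambda_B)(1-\lambda_A)/\lambda_A$; monotonicity in $\lambda_B$ pins the worst case to $\lambda_B=\tfrac12$; and the stationary condition $8\lambda_A^2-4\lambda_A-1=0$ with root $\lambda_A=\frac{1+\sqrt3}{4}$ does yield denominator $3-\sqrt3$ and ratio $\frac{2+\sqrt3}{4}=\frac{2\sqrt3+3}{4\sqrt3}$, matching the paper's constant (your worst case is the paper's in disguise, since $k$ equals the ratio $\lambda_A/\lambda_B = \frac{1+\sqrt3}{2}$ at $\lambda_B=\tfrac12$). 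The split-item step you flag is harmless, as MBB ratios and all aggregate sums are preserved under proportional splitting and $\max$ is additive over proportional copies; one small point worth stating explicitly is that items with positive value have positive equilibrium price and are fully sold, so the price-$1$/price-$1$ partition is legitimate. What each approach buys: yours leans on the Eisenberg--Gale equilibrium machinery already available in Section~2, yielding interpretable variables (the unique equilibrium utilities) and a very clean one-sided bound on the welfare deficit $D$, whereas the paper's argument is elementary and self-contained, using only the definition of PF and the frontier structure, at the cost of somewhat heavier bookkeeping over the three item groups.
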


\begin{proof}
Assume that the items are ordered in a decreasing fashion w.r.t.\ the $\frac{v_{Aj}}{v_{Bj}}$ ratio. It is easy to verify that
$x^*$ allocates each item $j$ either to agent $A$, or to agent $B$, depending on whether $v_{Aj}$ or $v_{Bj}$ is larger.
We assume that items valued equally by both are allocated to agent $B$, so $x^*$ can be thought of as defining an item $e$
in the ordering mentioned above such that all the items preceding item $e$ are allocated to agent $A$ and all the rest to agent
$B$. This type of ``frontier'' within this ordering, defining each player's allocation, actually occurs with all Pareto
efficient allocations, so the PF allocation also defines such a frontier\footnote{If the PF solution dictates that an item has
to be shared between the two agents, we can just split it into two items such that each one is allocated to a different agent; such a change does not affect the item ordering.}.
W.l.o.g.\ we assume that,  in the items' ordering, the frontier of PF comes before the frontier of the social welfare maximizing allocation.

These two frontiers separate the set of items into three groups.
Slightly abusing notation, let $v_{Ag}$ and $v_{Bg}$ denote the valuations of agents $A$ and $B$ respectively
for group $g\in \{1,2,3\}$, where the group number $g$ is consistent with the initial item ordering.
Note that $v_{A1}/v_{B1} \geq v_{A2}/v_{B2}$,
and  $v_{A2}/v_{B2}\geq v_{A3}/v_{B3}$. The ratio that we are studying can thus be rewritten as follows:

\begin{equation}\label{eq:ratio}
\frac{SW(x_{PF})}{SW(x^*)} ~=~ \frac{v_{A1}+v_{B2}+v_{B3}}{v_{A1}+v_{A2}+v_{B3}} ~=~ 1 - \frac{v_{A2}-v_{B2}}{v_{A1}+v_{A2}+v_{B3}}.
\end{equation}

Let $v_{A2} = kv_{B2}$ for some $k>1$. Then, $v_{A1}\geq kv_{B1}$.
Thus $k(v_{B1}+v_{B2})\leq v_{A1}+v_{A2} \leq 1$, or $k(1-v_{B3})\leq 1$, which yields
that $v_{B3}\geq (k-1)/k$. 
Also, by the definition of PF,
since the PF solution allocates the second group of items to agent $B$, 
$\frac{v_{A2}}{v_{A1}} \leq \frac{v_{B2}}{v_{B2}+v_{B3}}$ which, after substituting for
$v_{A2}$, yields $v_{A1}\geq k(v_{B2}+v_{B3})$.
Since $v_{B2}+v_{B3}=1-v_{B1}$, this inequality can be rewritten as $v_{A1}\geq k (1-v_{B1})$. Adding this inequality to
$v_{A1}\geq kv_{B1}$ yields
$v_{A1}\geq k/2$. Using these lower bounds for $v_{A1}$ and $v_{B3}$ in Equation (\ref{eq:ratio}), we get:

\begin{equation*}
\frac{SW(x_{PF})}{SW(x^*)} ~\geq~ 1 - \frac{(k-1)v_{B2}}{\frac{k}{2}+kv_{B2}+\frac{k-1}{k}} ~=~ 1 - \frac{2k(k-1)v_{B2}}{2k^2v_{B2}+k^2+2k-2} .
\end{equation*}

We can then show that for any value of $v_{B2}$, the right hand side of the inequality is minimized for $k$ as large as possible,
and that $k$ is restricted to be at most $\frac{1}{v_{B2}+0.5}$, implying that
$k= \frac{1+\sqrt{3}}{2}$
in the worst case, thereby proving the theorem (details in Appendix~\ref{app:missing}).

\end{proof}

\subsection{Two Truthful Mechanisms}
\paragraph{Swap-Dictatorial Mechanism.}
Our first mechanism picks one of the two players with 
probability $\frac 12$, and then, based on her bid, provides her
with her most preferred bundle of at most $m/2$ items.
The other player is allocated everything that is left.
Clearly, this mechanism is truthful since a bidder would prefer to bid truthfully if picked,
and her bid does not affect the allocation if she is not picked.
This mechanism achieves a $0.5$ approximation: If $v$ is some bidder's valuation for her allocated bundle if she is picked by the mechanism, 
then the least value that she receives if she is not picked
(and the other bidder is allocated her favorite bundle) is $1-v$. Therefore, both bidders get
an expected value of at least $0.5$, yielding an expected social welfare of at least $1$; the optimal social welfare is of course at most $2$. 
This is tight, as
the following example shows:
there are 4 items
valued by bidder A at $(1 - 2\eps,\eps,\eps/2,\eps/2)$ and by Bidder B at $(\eps,1 - 2\eps,\eps/2,\eps/2)$,
for some small $\eps>0$; the mechanism achieves a $1/2 + \Theta(\eps)$ approximation.
Finally, we note that we can use a deterministic version of this mechanism:
it plays this game twice, once with Bidder A as the dictator and
once with Bidder B in this role.
Each game is played with a set comprising one half of each original item.

\paragraph{Partial Allocation Mechanism.}
We now present an interesting non-swap-dictatorial truthful mechanism which we call the \emph{Partial Allocation} (PA) mechanism.
Let $v_A = v_A(x_{PF})$ and $v_B = v_B(x_{PF})$ denote respectively the fraction of their total valuation that Bidders A and B receive in the PF allocation (remember that the total valuation is equal to 1). 
The mechanism allocates Bidder A a fraction $v_B$ of each of the
items in her PF allocation and, similarly, Bidder B a fraction $v_A$ of each of the items in her PF
allocation.\footnote{Like before, we can split an item into two in order to avoid any sharing of items in the PF allocation.}
Note that, as a direct implication, Bidder A's utility is a $v_B$ approximation of her PF utility
and similarly Bidder B's utility is a $v_A$ approximation. In contrast to the previous mechanism, the types of instances
for which this mechanism performs poorly are the ones where, for example, both bidders value all items equally; this
would cause both bidders to receive only a $0.5$ fraction of their PF allocation, a $0.5$ approximation of the SW objective.
\begin{lemma}
The PA mechanism is truthful.
\end{lemma}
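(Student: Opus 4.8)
The plan is to show that no bidder can strictly increase her allocated utility by misreporting, by analyzing how the PF allocation---and the subsequent partial scaling---responds to a unilateral deviation. Recall that the PA mechanism gives Bidder $A$ a fraction $v_B$ of each item in her PF bundle, so her realized utility is exactly $v_A \cdot v_B$, the product of the two bidders' PF utilities (and symmetrically for $B$). Hence truthfulness for Bidder $A$ reduces to the claim that the product $v_A(x_{PF}) \cdot v_B(x_{PF})$, measured in $A$'s \emph{true} valuation, is maximized when $A$ reports truthfully. The first step I would take is to set up this reformulation cleanly: fix $B$'s report, let $A$ report some (possibly false) valuation, compute the resulting PF allocation, and then evaluate the \emph{true} utility $A$ derives. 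The key subtlety is that $v_A$ in the scaling factor is computed from the \emph{reported} valuations, while the utility $A$ actually enjoys is computed from her \emph{true} valuations; I must keep these two notions distinct throughout.

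The second and central step is to exploit the characterization of the PF allocation as the Eisenberg--Gale / market-equilibrium outcome with unit budgets, as stated in the preliminaries. Under truthful reporting, the PF allocation maximizes the Nash welfare $\log v_A + \log v_B$ over feasible allocations (equivalently, it is the competitive equilibrium from equal incomes). I would use the first-order / duality structure of this program: at the PF equilibrium there are prices $p_j$ on items with $\sum_j p_j = 2$ (two unit budgets), each bidder spends her unit budget only on items giving maximal bang-per-buck $v_{ij}/p_j$, and $A$'s equilibrium utility equals $v_A = \sum_{j \in S_A} v_{Aj}$ where $S_A$ is $A$'s equilibrium bundle. The plan is to show that when $A$ deviates, the change in the scaling factor $v_B$ and the change in her true consumed value $\sum_{j} (\text{fraction of } j \text{ she gets}) \cdot v^{\mathrm{true}}_{Aj}$ move against each other in a way that prevents a net gain in the product.

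The cleanest route I expect is a variational argument. I would consider the function that maps $A$'s report to her true realized utility $U_A = v_B \cdot \sum_{j \in S_A} v^{\mathrm{true}}_{Aj}$ and argue it is maximized at the truthful report. One promising observation is that $A$'s true utility at the truthful PF point is, up to the common scaling, proportional to the \emph{product} of both bidders' values, and the truthful PF allocation is precisely the maximizer of the geometric mean $\sqrt{v_A v_B}$ (equivalently $\log v_A + \log v_B$); because scaling $A$'s reported valuations does not change the equilibrium (scale-freeness of PF, noted in the introduction), $A$ cannot gain merely by exaggerating. The substantive content is that any genuine change in the \emph{shape} of $A$'s report can only shift the equilibrium bundle $S_A$ toward items $A$ values less under her true valuation, while simultaneously lowering the scaling factor $v_B$ she is multiplied by. I would make this precise by writing $A$'s true realized utility as a concave function evaluated along the path traced out by her deviating report and showing truthful reporting sits at its peak.

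The main obstacle I anticipate is handling the discrete reallocation of items across the equilibrium frontier as $A$'s report varies: the equilibrium bundle $S_A$ changes combinatorially, and $v_B$ is not a smooth function of $A$'s report at these breakpoints. My plan to surmount this is to work with the continuous allocation (splitting shared items as the footnote permits), express the realized utility $v_B \cdot v^{\mathrm{true}}_A(S_A)$ directly in terms of the equilibrium prices, and appeal to the envelope/optimality property of the Eisenberg--Gale program---specifically that at the truthful equilibrium, $A$'s marginal rate of substitution between any gained and lost item matches the price ratio, so no first-order reallocation improves the product, and concavity rules out any large deviation. Establishing this monotone trade-off between the scaling factor and the true value of the obtained bundle is where the real work lies; the rest is bookkeeping around the two valuation scales.
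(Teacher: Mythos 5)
Your reformulation is correct and you have the right objects on the table, but as written the argument has a genuine gap at exactly the point you flag as ``where the real work lies.'' The variational machinery you propose does not go through: the map from $A$'s report to her realized true utility is neither concave nor smooth in the report (the equilibrium bundle changes combinatorially, as you yourself note), so ``no first-order improvement plus concavity rules out large deviations'' is not available. Concavity of the Eisenberg--Gale objective is concavity in the \emph{allocation}, not in the report-to-outcome map, and the envelope theorem governs the optimal value of the EG program under the \emph{reported} valuations --- which is not $A$'s objective, since her payoff mixes the reported-valuation equilibrium (through the bundle and the scaling factor) with her \emph{true} valuation of the bundle. The ``monotone trade-off'' between the scaling factor $v_B$ and the true value of the obtained bundle is thus asserted, never established, and it is the entire content of the lemma.

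The irony is that the two facts you already state combine into a complete two-line proof, with no variational analysis at all. Fix $B$ truthful. Whatever $A$ reports, the mechanism computes some PF allocation $x'$ for the reported instance and pays $A$ exactly $v_B(x')\cdot v_A^{\mathrm{true}}(x'_A)$ --- note that the scaling factor $v_B(x')$ is $B$'s \emph{true} utility at $x'$, precisely because $B$'s report is truthful. So $A$'s realized payoff is the Nash product of the feasible allocation $x'$ evaluated at the true valuations. Since the truthful PF allocation $x_{PF}$ maximizes $v_A^{\mathrm{true}}(x)\,v_B(x)$ over \emph{all} feasible allocations (the EG characterization you invoke), and $x'$ is feasible, the deviation payoff is at most the truthful payoff $v_B(x_{PF})\,v_A^{\mathrm{true}}(x_{PF})$. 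Non-uniqueness of $x'$ is harmless (any tie-broken PF allocation is still feasible, and the PF utilities themselves are unique, as the preliminaries note). For comparison, the paper proves the same inequality by hand: it rescales $A$'s valuations so that $\bar v_A = v_B$, decomposes the deviation into gains and losses $g_A, l_A, g_B, l_B$, derives the exchange inequalities $l_B \ge g_A$ and $l_A \ge g_B$ from the aggregate-percentage-change definition of PF, and verifies algebraically that $(v_B - l_B + g_B)(\bar v_A - l_A + g_A) \le v_B \bar v_A$ --- in effect a local certification that the Nash product cannot increase. The global Nash-product argument is shorter and more robust (it does not depend on the two-bidder gain/loss bookkeeping), so you should replace your variational steps with it rather than trying to repair them.
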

\begin{proof}
We show that Bidder A is truthful; the same argument applies to Bidder B.
We rescale the valuations of Bidder A so that her new valuation for the PF allocation is
$\bar{v}_A = v_B$.
Suppose Bidder A changes her bid and, in the changed PF allocation, gains allocation of value $g_A$
and loses value $l_A$, according to the scaled valuations. Let the gains  and losses to Bidder B
have values $g_B$ and $l_B$.
Since $\bar{v}_A = v_B$ and given the definition of PF, we get $l_B \ge g_A$ and $l_A \ge g_B$, for otherwise either one
of these shifts of allocation between A and B would have lead to a higher percentage increase than decrease.
The rescaled valuation of Bidder A after changing her bid would be:

\begin{eqnarray*}
(v_B - l_B + g_B)(\bar{v}_A - l_A + g_A)  & = &
 v_B \bar{v}_A + v_B (g_A - l_A) + \bar{v}_A (g_B - l_B) + (g_B - l_B)(g_A - l_A)  \\
& = &  v_B[\bar{v}_A - (l_A - g_A) - (l_B - g_B)] - (g_A - l_A)(l_B - g_B)  \\
& = &  v_B[\bar{v}_A - (l_B - g_A) - (l_A - g_B)] - (g_A - l_A)(l_B - g_B)  \\
& \le &  v_B\bar{v}_A - (g_A - l_A)(l_B - g_B).
\end{eqnarray*}

Now, if $g_A \ge l_A$, then as $l_B \ge g_A$,
and
$l_A \ge g_B$,
it follows that
$l_B \ge g_B$;
thus
the new valuation is
at most
the original $v_B\cdot \bar{v}_A$.
A similar argument applies if $g_B \ge l_B$.
Otherwise, $g_A \le l_A$ and $g_B \le l_B$ and the reduction in value for Bidder A as
is at least
$v_B[(l_A - g_A) + (l_B - g_B)] + (g_A - l_A)(l_B - g_B)
= (l_A - g_A)[v_B - (l_B - g_B)] + (l_B - g_B) \ge 0$,
as $l_B \le v_B$.
\end{proof}

\subsection{The Hybrid Mechanism}
We now provide a non-swap-dictatorial truthful mechanism that outperforms all swap-dictatorial mechanisms, giving a 0.622 approximation of the SW objective.
Our mechanism is a combination of the swap-dictatorial mechanism and the PA mechanism described in the previous subsection.
A randomized version of it would run the swap-dictatorial mechanism with probability $0.5$ and the PA mechanism with the remaining $0.5$ probability.
Again, this can be implemented in a deterministic fashion by spliting all items in half and using a different
mechanism for each set of halves. The benefit of combining them comes from the fact that one performs well when the other one does not.

\begin{lemma}\label{lem:Hybrid}
The hybrid mechanism outputs an allocation $x_m$ always satisfying $\frac{SW(x_m)}{SW(x_{PF})}\geq \frac{2}{3}$.
\end{lemma}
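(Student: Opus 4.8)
The plan is to bound $SW(x_m)$ directly against $SW(x_{PF})$ rather than against $x^*$, since combining such a bound with Theorem~\ref{thm:PFeff} is exactly what delivers the advertised $0.622$ factor. Write $v_A = v_A(x_{PF})$ and $v_B = v_B(x_{PF})$, so that $SW(x_{PF}) = v_A + v_B$. Because the hybrid mechanism runs the swap-dictatorial and PA mechanisms each with probability $\tfrac12$ (equivalently, splits every item in half and runs one mechanism on each set of halves), linearity of the valuations gives
\[
SW(x_m) = \tfrac12\, SW(x_{sd}) + \tfrac12\, SW(x_{PA}),
\]
where $x_{sd}$ and $x_{PA}$ denote the outputs of the two component mechanisms. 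Lower-bounding each term on the right is the next step.

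For the PA mechanism, Bidder A receives a $v_B$ fraction of each item in her PF bundle, whose full value to her is $v_A$; by additive linearity her utility is $v_A v_B$, and symmetrically Bidder B's is $v_A v_B$, so $SW(x_{PA}) = 2\,v_A v_B$. For the swap-dictatorial mechanism I would simply reuse the bound already established in this subsection, namely that each bidder obtains expected value at least $\tfrac12$ and hence $SW(x_{sd}) \ge 1$. Substituting both facts yields the working inequality
\[
SW(x_m) \;\ge\; \tfrac12 + v_A v_B .
\]

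It then remains to show $\tfrac12 + v_A v_B \ge \tfrac23 (v_A + v_B)$. The crucial extra input is the range of the variables: since the PF allocation is proportional (indeed envy-free) for two agents, each bidder's PF value is at least $\tfrac1n = \tfrac12$, so $v_A, v_B \in [\tfrac12, 1]$. Set $f(v_A,v_B) = \tfrac12 + v_A v_B - \tfrac23(v_A + v_B)$; I want $f \ge 0$ on $[\tfrac12,1]^2$. Since $f$ is bilinear (affine in each variable separately), its minimum over the box is attained at a vertex, so it suffices to evaluate the four corners: the values are $\tfrac1{12}$ at $(\tfrac12,\tfrac12)$, $\tfrac16$ at $(1,1)$, and $0$ at both $(\tfrac12,1)$ and $(1,\tfrac12)$. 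All are nonnegative, proving the claim (and showing the bound is tight at the asymmetric corners).

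The only genuine subtlety is this last step, where the key realization is that the lower bound $v_A, v_B \ge \tfrac12$ is indispensable: without it the inequality fails (for instance as $v_A \to 0$ with $v_B = 1$), so the argument genuinely leans on proportionality of the PF allocation. Everything else is a direct substitution, and once bilinearity is observed the corner check is routine. I expect the main care to be in stating the proportionality fact cleanly and in confirming that the item-splitting implementation preserves the averaging identity for $SW$.
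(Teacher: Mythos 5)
Your proof is correct and follows essentially the same route as the paper: the identical decomposition $SW(x_m) \ge \tfrac12 + v_A v_B$ (using $SW \ge 1$ for the swap-dictatorial component and $2v_Av_B$ for PA), the same indispensable restriction $v_A, v_B \in [\tfrac12, 1]$ from proportionality of the PF allocation, and the same extremal corners $(\tfrac12,1)$ and $(1,\tfrac12)$ where the bound is tight at $\tfrac23$. The only difference is cosmetic: where the paper analyzes the sign of $\partial_{v_A}\bigl(\tfrac{0.5+v_Av_B}{v_A+v_B}\bigr) = \tfrac{v_B^2-0.5}{(v_A+v_B)^2}$ and invokes symmetry, you clear the denominator and exploit bilinearity to reduce to a four-corner check, which is a slightly more elementary way to finish the same minimization.
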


\begin{proof}
(Sketch.)~
The swap-dictatorial mechanism always provides a social welfare of at least $1$.
The PA mechanism provides a social welfare of $2v_Av_B$. Thus, $\frac{SW(x_m)}{SW(x_{PF})} = \frac{0.5 + v_Av_B}{v_A+v_B}$.
We show that for all $(v_A,v_B)\in [0.5 , 1]^2$ this is minimized for $v_A=0.5$ and $v_B=1$ (details in Appendix~\ref{app:missing}).
\end{proof}

This lemma combined with Theorem \ref{thm:PFeff} immediately implies:

\begin{theorem}
The hybrid mechanism always satisfies $\frac{SW(x_m)}{SW(x^*)}\geq 0.933 \cdot \frac{2}{3} \approx 0.622$.
\end{theorem}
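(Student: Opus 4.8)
The plan is to obtain the bound by simply composing the two results already established, since the hybrid mechanism's guarantee in Lemma~\ref{lem:Hybrid} is stated relative to $SW(x_{PF})$, while Theorem~\ref{thm:PFeff} relates $SW(x_{PF})$ to the social-welfare optimum $SW(x^*)$. The common intermediate quantity $SW(x_{PF})$ therefore telescopes out of the product, and no new analysis is required.

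Concretely, for any two-agent, multiple-item instance let $x_m$ be the output of the hybrid mechanism, $x_{PF}$ the PF allocation, and $x^*$ the social-welfare-optimal allocation. Because each bidder's total valuation is normalized to $1$, the PF allocation gives every bidder strictly positive value, so $SW(x_{PF})>0$ and the factorization below is legitimate. I would then write
\[
\frac{SW(x_m)}{SW(x^*)} \;=\; \frac{SW(x_m)}{SW(x_{PF})}\cdot\frac{SW(x_{PF})}{SW(x^*)}
\]
and bound the two factors separately: the first by $\tfrac{2}{3}$ using Lemma~\ref{lem:Hybrid}, and the second by $\tfrac{2\sqrt{3}+3}{4\sqrt{3}}\approx 0.933$ using Theorem~\ref{thm:PFeff}. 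Multiplying these two lower bounds yields $\tfrac{2}{3}\cdot 0.933\approx 0.622$, exactly the claimed guarantee.

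There is essentially no remaining obstacle at this stage: all the analytic work --- the worst-case comparison of the PF and optimal social welfare in Theorem~\ref{thm:PFeff}, and the $[0.5,1]^2$ minimization establishing the $\tfrac{2}{3}$ guarantee of the hybrid mechanism in Lemma~\ref{lem:Hybrid} --- has already been carried out. The only point worth verifying is that both inequalities are proved for the \emph{same} class of instances (two agents, arbitrarily many items) and with respect to the \emph{same} allocation $x_{PF}$, so that the chaining is a genuine multiplication of bounds on the same quantity rather than a comparison across distinct regimes. Since both prior results are stated precisely in this setting and both are worst-case (hold for every such instance), the composition goes through directly and the product of the two worst-case factors is itself a valid worst-case lower bound.
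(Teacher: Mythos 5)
Your proposal is correct and is exactly the paper's argument: the authors state that Lemma~\ref{lem:Hybrid} combined with Theorem~\ref{thm:PFeff} immediately implies the theorem, i.e., the same telescoping product $\frac{SW(x_m)}{SW(x_{PF})}\cdot\frac{SW(x_{PF})}{SW(x^*)}\geq \frac{2}{3}\cdot 0.933$. Your added remarks on the positivity of $SW(x_{PF})$ and the matching of instance classes are sound but routine.
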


\section{Proportional Fairness Approximation}\label{sec:PFapprox}

Despite the fact that maximizing social welfare is a very natural objective, one can quickly
verify that allocations that are very efficient in terms of social welfare can be very unfair
to some bidders. Dealing with problems such as the one that arose with the Czech privatization 
auctions~\cite{AH2000} calls for solutions that are fair. In what follows, we use the PF allocation
as a benchmark for the ``fair share'' that each bidder should be receiving and we aim to provide
every bidder with a good approximation of his value for that share.
We start by focusing on the case of two items which helps build some intuition for our solution
to the general case that follows.

\subsection{Two Items}\label{sec:2-items}

\paragraph{Proportionally fair allocation for two items.}
For instances involving just two items $t$ and $b$ (or top and bottom), for simplicity we choose
one of the two items (w.l.o.g.\ $b$) and rescale the valuations of each bidder so that they all have a
valuation of 1 for item $b$. (If a bidder has zero valuation for item $b$, then her valuation
for item $t$ is set to be $\infty$.) We then sort the bidders in decreasing order of their valuation
for item $t$ (breaking ties arbitrarily). The proof of the following lemma (in Appendix \ref{app:missing})
shows that for two-item instances there always exists a PF allocation such that at most one bidder is allocated
parts of both items. The proof also shows that if such a bidder exists and her valuation for item $t$
is $v$, then her valuation for the PF allocation will be $\frac{v+1}{n}$. Such a bidder will also
be defining the relative value of the PF prices of the two items (the PF price of $t$ is $v$ times that of $b$),
so we will henceforth call her the \emph{Ratio Defining Bidder}, denoted $R_b$.

\begin{lemma}\label{lem:TwoItems}
For two item instances, there always exists a PF allocation with at most one $R_b$.
\end{lemma}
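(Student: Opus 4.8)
The plan is to work through the market-equilibrium characterization of the PF allocation. By the Eisenberg--Gale correspondence recalled above, the PF allocation for two items coincides with a market equilibrium in which every bidder has a unit budget; let $p_t$ and $p_b$ denote the (unique) equilibrium prices of the two items and set $r = p_t/p_b$. At equilibrium each bidder spends her whole budget on the item(s) of maximum bang-per-buck $v_{ij}/p_j$. Since each bidder's value for $b$ is normalized to $1$, bidder $i$ strictly prefers $t$ when $v_{it} > r$, strictly prefers $b$ when $v_{it} < r$, and is indifferent exactly when $v_{it} = r$. Because we sorted the bidders by decreasing $v_{it}$, the bidders preferring $t$ form a prefix, those preferring $b$ a suffix, and the indifferent bidders (if any) a contiguous block in between. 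The achieved valuations and prices are pinned down by the cited uniqueness result, so the only freedom in the allocation is how the indifferent bidders split their spending between $t$ and $b$.

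First I would observe that the only-$t$ and only-$b$ bidders each receive a single item, so they never straddle; it therefore suffices to allocate among the indifferent bidders so that at most one of them receives parts of both items. Let $S_t$ and $S_b$ be the total budgets of the only-$t$ and only-$b$ bidders; market clearing together with $p_t + p_b = n$ (total value equals total budget) forces the indifferent bidders to absorb exactly $p_t - S_t$ worth of $t$ and $p_b - S_b$ worth of $b$, and their total budget equals the sum of these two quantities. I would then fill greedily: order the indifferent bidders arbitrarily and have them spend their unit budgets on $t$ one at a time until the remaining $t$-value $a := p_t - S_t$ is exhausted. All but at most one spend their entire budget on a single item; the unique possible exception is the bidder who finishes off the supply of $t$ (buying $a - \lfloor a\rfloor < 1$ worth of it) and spends the rest of her budget on $b$. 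Every subsequent indifferent bidder buys only $b$. This yields a feasible, market-clearing allocation in which at most one bidder --- the straddler --- receives parts of both items, establishing existence and identifying $R_b$.

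To pin down the PF utility of $R_b$ I would compute the prices explicitly. If $R_b$ exists, her value for $t$ is $v = r = p_t/p_b$ since she is indifferent. Combining $p_t = v\,p_b$ with $p_t + p_b = n$ gives $p_b = n/(v+1)$. As $R_b$ spends her unit budget at bang-per-buck $1/p_b$ for item $b$, her utility is $1/p_b = (v+1)/n$, as claimed.

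I expect the only real subtlety to be the bookkeeping of the market-clearing constraint --- verifying that the indifferent block absorbs precisely the leftover $t$- and $b$-value with a single straddler --- together with the degenerate cases (no indifferent bidders, $a$ an integer, or a bidder with zero value for $b$ whose $t$-value is taken to be $\infty$ and who is then an only-$t$ bidder). In each of these the number of straddlers is $0$, so the ``at most one'' bound holds \emph{a fortiori}, and everything else reduces to the standard bang-per-buck description of a Fisher-market equilibrium.
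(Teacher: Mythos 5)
Your proof is correct and follows essentially the same route as the paper's: both fix the unique equilibrium (PF) prices with $p_t + p_b = n$, sort bidders by their normalized value $v_{it}$, construct an allocation in which a prefix of bidders spends entirely on $t$, at most one bidder straddles both items, and the rest spend on $b$, and then derive $R_b$'s utility $\frac{v+1}{n}$ from $p_t = v\,p_b$. The only difference is presentational: your greedy fill over the indifferent block and the bang-per-buck bookkeeping spell out the market-clearing verification that the paper compresses into ``it is easy to see that the following allocation is PF.''
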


\paragraph{Mechanism for many bidders and two items.}

In this scheme \emph{every} bidder receives a fraction of just one item.
The mechanism can be thought of as a tie breaking rule that doesn't allow the existence of a bidder
being allocated parts of more than one item. More specifically, given the bidders' bids,
the mechanism computes a PF allocation with at most one $R_b$\footnote{This can be done by using binary search
over the bidder ordering until $R_b$, if one exists, is found.} and if one exists for that allocation,
we instead force her to either equally share item $t$ with the {\sc Top} bidders or item $b$
with the {\sc Bottom} bidders (respectively the bidders before and after $R_b$ in the above ordering).
Among these two options, the mechanism chooses the one that maximizes $R_b$'s utility.
Let $\rho$ denote the fraction of her PF utility that $R_b$ receives.
Every other bidder's allocation is reduced
as needed
so as to provide that same $\rho$ fraction.
We call this the \emph{Single Item} (SI) mechanism.

\begin{lemma}
\label{lem:SIMapprox}
With $n$ bidders, the SI mechanism achieves an approximation factor of $\frac{n}{n+1}$.
\end{lemma}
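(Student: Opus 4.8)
The plan is to track the single quantity $\rho$, the fraction of her PF utility that $R_b$ receives, and show $\rho \geq \frac{n}{n+1}$; since the mechanism levels every bidder down to this same fraction, $\rho$ is exactly the PF-objective value. First I would dispose of the trivial case in which the PF allocation has no $R_b$: then every bidder already holds a piece of a single item at her full PF utility, so $\rho = 1$. Otherwise, let $R_b$ value item $t$ at $v$ (and $b$ at $1$ after rescaling), and recall from Lemma~\ref{lem:TwoItems} that her PF utility is $\frac{v+1}{n}$. Writing $n_T$ and $n_B$ for the numbers of {\sc Top} and {\sc Bottom} bidders, we have $n_T + n_B + 1 = n$. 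Equally sharing $t$ with the top group gives $R_b$ value $\frac{v}{n_T+1}$, while equally sharing $b$ with the bottom group gives her $\frac{1}{n_B+1}$; since the mechanism keeps the better option, $\rho = \frac{n}{v+1}\max\!\left(\frac{v}{n_T+1},\,\frac{1}{n_B+1}\right)$.

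The crux is the purely combinatorial inequality $\max\!\left(\frac{v}{n_T+1},\frac{1}{n_B+1}\right) \geq \frac{v+1}{n+1}$, which is equivalent to $\rho \geq \frac{n}{n+1}$. I would prove it by contradiction: if both terms were strictly below $\frac{v+1}{n+1}$, then $n_T+1 > \frac{v(n+1)}{v+1}$ and $n_B+1 > \frac{n+1}{v+1}$, and adding these gives $n_T+n_B+2 > n+1$. By integrality this forces $n_T+n_B \geq n$, contradicting $n_T+n_B = n-1$. Notably this step uses only the bidder count, not the prices, which is what makes it clean.

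Finally I would check that leveling the other bidders down to $\rho$ is feasible. The bidders sharing with $R_b$ automatically receive fraction $\frac{p_t}{n_T+1}$ (resp.\ $\frac{p_b}{n_B+1}$) of their PF utility, independent of their individual valuations, and this coincides with $\rho$; the bidders in the untouched group sit at fraction $1$, so all that is needed is $\rho \leq 1$, which follows from the unit-budget relations $p_t \leq n_T+1$ and $p_b \leq n_B+1$ (each contributor to an item spends at most one unit of budget). Hence all $n$ bidders end at the common fraction $\rho \geq \frac{n}{n+1}$, giving the claimed approximation factor. The only real work is assembling the correct expressions for $R_b$'s PF utility and the PF prices; after that the counting argument is immediate, and tightness is seen by choosing $v,n_T,n_B$ that make the two sharing options equal, i.e.\ both equal to $\frac{v+1}{n+1}$.
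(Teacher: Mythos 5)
Your proof is correct and rests on the same skeleton as the paper's: reduce everything to $R_b$'s better sharing option, $\max\bigl(\frac{v}{n_T+1},\frac{1}{n_B+1}\bigr)$, measured against her PF utility $\frac{v+1}{n}$ from Lemma~\ref{lem:TwoItems}, with all other bidders leveled to the same fraction $\rho$. Where you genuinely diverge is the finishing step. The paper asserts that the worst $\rho$ occurs when the two options give $R_b$ equal utility, sets $\frac{v}{k}=\frac{1}{n-k+1}$ (its $k$ is your $n_T+1$), and substitutes to get $\frac{n}{n+1}$ exactly; this tacitly relies on the top-sharing ratio being increasing in $v$ and the bottom-sharing ratio decreasing, so that their maximum is minimized at the crossing point. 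Your counting argument --- adding the negations $n_T+1>\frac{v(n+1)}{v+1}$ and $n_B+1>\frac{n+1}{v+1}$ to contradict $n_T+n_B=n-1$ --- proves the bound unconditionally for every $v$ and every position of $R_b$, with no monotonicity claim; in fact you do not even need the appeal to integrality, since the strict inequality $n_T+n_B>n-1$ already contradicts the exact count. You also make explicit two points the paper's proof leaves tacit: that $\rho=1$ when no $R_b$ exists, and that leveling the untouched group down to $\rho$ is feasible, which you correctly derive from $p_t=n_T+x\le n_T+1$ and $p_b=n_B+(1-x)\le n_B+1$ (so both candidate values of $\rho$ are at most $1$). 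The paper's route buys an immediate identification of the tight instance; yours buys a self-contained, calculus-free inequality that makes transparent that only the bidder count, not the prices, drives the $\frac{n}{n+1}$ bound.
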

\begin{proof}
The worst $\rho$ occurs when $R_b$ achieves the same utility with either option. If $R_b$ is the $k$-th
bidder in the ordering, this means that she values being allocated $1/k$ of the top item equally with
$\frac{1}{n-k+1}$ of the bottom item, i.e.\ $\frac vk = \frac{1}{n-k+1}$, or $ v = \frac{k}{n-k+1}$.
The approximation factor then,
on substituting for
$v$,
is given by:
\begin{equation*}
\frac{v/k}{(v+1)/n} = \frac{1/(n-k+1)}{[(n+1)/(n-k+1)]/n} =\frac{n}{n+1}.
\end{equation*}
\end{proof}

\begin{lemma}
\label{lem:SIMtruth}
The SI mechanism is truthful. (Proof in Appendix \ref{app:missing}.)
\end{lemma}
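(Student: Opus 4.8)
The plan is to fix the reports of all bidders other than $i$ and study $i$'s realized (true) utility as a function of her own report. Since the SI mechanism is scale-free, $i$'s report is effectively a single parameter $r$, her declared value-ratio of item $t$ to item $b$; and since her true value for the bottom item is a fixed positive constant, maximizing her true utility is equivalent to maximizing her value measured in the rescaled units in which the bottom value is $1$. Writing $a$ for her true ratio, I would first record the closed form implied by Lemma~\ref{lem:TwoItems} and the mechanism's definition: every bidder is served a fraction of exactly one item and receives the same fraction $\rho$ of her PF utility, so a top bidder obtains $\rho/p_t$ units of $t$ and a bottom bidder $\rho/p_b$ units of $b$, where $p_t,p_b$ are the PF prices and $\rho=\max\{p_t/k,\ p_b/(n-k+1)\}$ with $k$ the rank of $R_b$. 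Thus $i$'s true utility is $a\rho/p_t$ when she is served $t$ and $\rho/p_b$ when she is served $b$.

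Next I would establish the \emph{menu structure} of $i$'s options. Raising $r$ only increases aggregate demand for $t$, so the equilibrium price ratio $\tau=p_t/p_b$ is monotone non-decreasing in $r$ (with $p_t$ non-decreasing and $p_b$ non-increasing). Crucially, once $r$ is large enough to place $i$ strictly among the top bidders, her unit of budget lands entirely on $t$ irrespective of the exact value of $r$; hence $\tau$, the prices, the rank $k$, and $\rho$ all freeze, and $i$'s received quantity settles at a fixed value $q^t_{\mathrm{top}}$. Symmetrically, for all sufficiently small $r$ she is a solid bottom bidder receiving a fixed quantity $q^b_{\mathrm{bot}}$. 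Consequently, as $r$ ranges over the nonnegative reals, $i$'s outcome is one of three types: the fixed top anchor, the fixed bottom anchor, or an intermediate outcome obtained while $i$ herself is the ratio-defining bidder.

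With this in hand, truthfulness follows from two claims. The \emph{local} claim is that truthful reporting serves $i$ on the side she genuinely prefers at the prevailing prices: the mechanism puts $i$ on top exactly when $a>\tau$, which is precisely the condition under which $a/p_t>1/p_b$, i.e.\ under which $t$ is the more valuable item to her per unit of $\rho$. The \emph{global} claim is that deviating to the opposite side, or to an intermediate $R_b$ report, cannot help. For the $R_b$ reports I would observe that when $i$ is forced to share a single item the mechanism picks the item maximizing her \emph{declared} value, which a lie can make differ from her truly preferred item; reporting truthfully therefore keeps her on her genuinely preferred side (or leaves her indifferent at the boundary), weakly dominating every intermediate report.

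The main obstacle is the cross-regime comparison inside the global claim: switching the side on which $i$ is served perturbs the entire market, changing $p_t,p_b,k$ and hence $\rho$, so the two anchor utilities cannot be compared at a common price vector. I plan to control this using the monotonicity of $\tau,p_t,p_b$ in $r$ together with the closed form $\rho=\max\{p_t/k,\ p_b/(n-k+1)\}$, tracking the utility $U(r)$ as $r$ moves from the truthful report toward the opposite tail and verifying that it has no upward jump as $i$ crosses from her preferred side, through the $R_b$ region, to the other side; this shows the opposite anchor is dominated by the truthful outcome, so $U$ is maximized at $r=a$. Confirming the absence of a jump exactly at the transitions---where $k$ changes by one or where $R_b$'s forced item flips---is the delicate, calculation-heavy step, while the rest of the argument is structural.
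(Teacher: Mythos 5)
Your setup is sound, and in places sharper than the paper's own write-up: the closed form $\rho=\max\bigl\{p_t/k,\;p_b/(n-k+1)\bigr\}$ is correct (with $p_t=vn/(v+1)$ and $p_b=n/(v+1)$ it equals $\max\bigl\{vn/(k(v+1)),\;n/((v+1)(n-k+1))\bigr\}$, which is exactly what the mechanism optimizes for $R_b$), the reduction of a report to a single ratio parameter $r$ is legitimate, the menu structure (frozen top anchor, frozen bottom anchor, an intermediate region where $i$ is ratio-defining) is right, and your local claim is precisely the mechanism's tie-breaking rule. The useful observation that within the $R_b$ region the utility is piecewise constant in $r$ (she receives an equal $1/k$ share of one item regardless of the exact report) is also what makes a ``no upward jump'' criterion sufficient in principle.

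The genuine gap is exactly the step you defer: you never verify the absence of upward jumps at the transitions, and those verifications are not peripheral bookkeeping --- they \emph{are} the lemma. The paper's proof (Appendix A) consists entirely of these cross-regime comparisons: a \textsc{Top} bidder who underbids past $R_b$ when $R_b$ had been forced to the \emph{top} can only end up sharing the bottom item, which even $R_b$ (with a smaller ratio) rejected; if $R_b$ had been forced to the \emph{bottom}, the deviator either suffers, with everyone, a reduced $\rho$, or becomes the new ratio-defining bidder forced to share the bottom item at fraction $n/[(v+1)(n-k+1)]$, which is shown smaller than the $\rho$ she had before; symmetrically, $R_b$ underbidding a \textsc{Bottom} bidder creates a new ratio-defining bidder forced upward, dropping everyone to $vn/[k(v+1)]$, again smaller. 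These inequalities live exactly where the prices, the rank $k$, the forced side, and the identity of $R_b$ all change simultaneously --- the interactions your plan acknowledges but does not control. As submitted, the proposal is a correct scaffolding plus an announcement that the crux will be checked; until those transition inequalities are carried out (in both directions of deviation and for every transition type: $k\mapsto k\pm 1$, side flips, and changes of which bidder is ratio-defining), truthfulness is asserted rather than proven. If executed, your argument would be a reorganized, somewhat more systematic version of the paper's case analysis rather than a different route.
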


This mechanism performs really well as the number of bidders increases, but can be
further improved for small values of $n$.
In Appendix~\ref{improved_twoItem} we present an interesting and more efficient mechanism for the special case of $n=2$,
which achieves an approximation factor of $0.828$ instead of $2/3$, and a similar mechanism for $n=3$, achieving an 
approximation factor of $0.775$ instead of $3/4$.
As $n$ increases beyond $n=3$, such more elaborate schemes provide, at best, very modest gains compared to the
$\frac{n}{n+1}$ approximation mechanism presented above.

\newcommand{\p}{\mbox{\boldmath $p$}}
\newcommand{\q}{\mbox{\boldmath $q$}}

\subsection{Many Bidders and Many Items}\label{sec:many-many}

Using the intuition acquired from the two-item case, we now describe the {\em Strong Demand Matching} mechanism (SDM) for the general case.
Informally speaking, SDM provides every bidder with a unit budget and then aims to discover item prices such that 
the demand of each bidder can be satisfied using (a fraction of) just one item. This approach resembles the SI mechanism, but the structure
of the PF solution can now be much more complicated. This mechanism tackles the problem of allocating a collection of goods for the very
natural set of problem instances for which the bidders may have arbitrary valuation functions, yet no item is undemanded if its price is low.
In what follows we describe the process by which SDM increases the prices of
overdemanded items in a fashion that maintains truthfulness and yields prices that are very close to the PF prices.


Let $p_j$ denote the price of item $e_j$, and let the \emph{bang per buck} that Bidder $i$ gets 
from item $e_j$ equal $v_{ij}/p_j$. We say that item $e_j$ is an MBB item of Bidder $i$ if Bidder $i$ gets the 
maximum bang per buck from that item\footnote{Note that for each bidder there could be multiple MBB items.}.
For a given price vector $p$, let the demand graph $D(p)$ be a bipartite graph with bidders on one side and items on the other, such that
an edge between Bidder $i$ and item $e_j$ exists if and only if $e_j$ is an MBB item of Bidder $i$.
We call $c_j=\lfloor p_j\rfloor$ the \emph{capacity} of item $e_j$ when its price is $p_j$, and we say an assignment of bidders to items is \emph{valid} 
if it matches each bidder to at most one item and no item $e_j$ is matched to more than $c_j$ bidders.
Given a valid assignment $A$, we say an item $e_j$ is \emph{reachable} from Bidder $i$ if there exists an alternating 
path $(i, j_1, i_1, j_2, i_2,\cdots, j_k, i_k, j)$ in the graph $D(p)$ such that edges $(i_1, j_1),\cdots ,(i_k, j_k)$ 
lie in the assignment $A$. Finally, let $d(R)$ be the collection of bidders with all their MBB items in set $R$. \\

\noindent
The SDM mechanism initializes all item prices to $p_j=1$ and iterates as follows:

\begin{enumerate}
\item
Find a valid assignment that maximizes the number of matched bidders.\\ 
If all the bidders are matched, conclude with Step 3.

 \item
Let $U$ be the set of bidders who are not matched in Step 1. \\
Let $R$ be the set of all items reachable from bidders in the set $U$.\\
Raise the price of each item $e_j$ in $R$ from $p_j$ to $x\cdot p_j$,\\
where $x\geq 1$ is the minimum value for which one of the following events takes place:

\begin{enumerate} 
\item
The price of an item in $R$ reaches an integral value. If this happens, repeat Step 1. 

\item
For some bidder $b_i \in d(R)$, her set of MBB items increases, causing $R$ to grow:

\begin{enumerate} 
\item
If for each item $e_j$ added to $R$, the number of bidders matched to it equals $c_j$,\\ continue with Step 2.
\item
If some item $e_j$ added to $R$ has $c_j$ greater than the number of bidders matched to it, continue with Step 1.
\end{enumerate}
\end{enumerate}

 \item
Every bidder matched to some item $e_j$ is allocated a fraction $1/p_j$ of that item.

 \end{enumerate}


\noindent
It remains to explain how to carry out Step 2. Set $R$ can be found using a breadth-first-search like algorithm.
To determine when (a) is reached, we just need to know the smallest $\lceil p_j\rceil/p_j$
ratio over all items whose price is being increased.
For (b), we need to calculate, for each bidder in $d(R)$, the ratio of the \emph{bang per buck}
for her MBB items and for the items outside the set $R$.

\paragraph{Running time.}
If $c(R)=\sum_{j\in R} c_j$ denotes the total capacity in $R$, it is not difficult to see that if $U$ is non-empty, $|d(R)| > c(R)$.
Note that each time either event (a) or event (b)-ii occurs, $c(R)$ increases by at least 1, 
and thus, using the alternating path from a bidder in the set $U$ 
to the corresponding item, we can increase the number of matched bidders by at least 1;
this means that this can occur at most
$n$ times. The only other events are the unions resulting from (b)-i.
There can be at most $\min(n,m)$ of these, and they are followed by either Step (a) or (b)-ii.
Thus there are $O(n*\min(n,m))$ iterations of Step (b)-i and $O(n)$ iterations of Steps 1 and (b)-ii.

\paragraph{Correctness.}
Let $p^*$ be the PF prices and let $q$ be the prices computed by the
algorithm.

\begin{lemma}
\label{lem:price-rounding}
 Let $f= \max_j \lceil p^*_j\rceil/p^*_j$. Then $q \leq f p^*$.
\end{lemma}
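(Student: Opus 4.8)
The plan is to show that the ascending prices $q$ never overshoot $fp^*$ by examining the first instant at which some coordinate would exceed its bound, and contradicting the market-clearing structure of the PF prices. The single most useful fact is the elementary consequence of the definition of $f$: since $f=\max_k\lceil p^*_k\rceil/p^*_k\ge \lceil p^*_j\rceil/p^*_j$, multiplying by $p^*_j>0$ gives $f p^*_j\ge\lceil p^*_j\rceil\ge p^*_j$ for every $j$. Thus any item sitting at price $fp^*_j$ has integer capacity $\lfloor fp^*_j\rfloor\ge\lceil p^*_j\rceil\ge p^*_j$, which is what will let me compare integral capacities against the fractional PF prices.

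First I would record the Hall-type (tight-set) inequality satisfied by the PF prices. Because $p^*$ are the Fisher-market equilibrium prices for $n$ unit-budget bidders, at equilibrium each bidder spends her entire budget on MBB items with respect to $p^*$, the market clears so that each item $j$ absorbs exactly $p^*_j$ in spending, and $\sum_j p^*_j=n$. Hence, for every set of items $S$, the bidders all of whose $p^*$-MBB items lie in $S$ — denote this set $d^*(S)$ — must spend their whole budgets inside $S$, which yields
\[
|d^*(S)|\ \le\ \sum_{j\in S} p^*_j .
\]
I would then introduce the critical instant: assuming the claim fails, consider the first moment during the run at which some price reaches $fp^*_{j_0}$, let $p$ be the price vector at that moment, and let $S=\{\,j:\ p_j/p^*_j=f\,\}$ be the set of items at the (now maximal) ratio $f$, every other item having ratio strictly below $f$ by the choice of the instant.

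Next I would prove the ratio/demand-containment step. If $j\in S$ is a $p$-MBB item of some bidder $i$, then for every $k\notin S$ we have $p_j/p_k = f p^*_j/p_k > p^*_j/p^*_k$ (using $p_k<f p^*_k$), so from $v_{ij}/p_j\ge v_{ik}/p_k$ we get $v_{ij}/p^*_j> v_{ik}/p^*_k$, meaning no $k\notin S$ can be a $p^*$-MBB item of $i$; hence $i\in d^*(S)$. Consequently every bidder with a $p$-MBB item in $S$ lies in $d^*(S)$, and combining with the tight-set inequality and the capacity fact above,
\[
\bigl|\{\,i:\ i\text{ has a $p$-MBB item in }S\,\}\bigr|\ \le\ |d^*(S)|\ \le\ \sum_{j\in S} p^*_j\ \le\ \sum_{j\in S}\lfloor p_j\rfloor\ =\ c(S).
\]
So the top-ratio set $S$ carries demand no larger than its integral capacity: it cannot be over-demanded, and in the maximum matching maintained in Step~1 the bidders forced into $S$ together with the capacities of $S$ leave no slack to be exceeded.

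The final step, and the part I expect to be the main obstacle, is to convert ``$S$ is not over-demanded'' into ``$S$ is not being raised,'' thereby contradicting the fact that an item of $S$ reached $fp^*_{j_0}$ — which can only happen if that item lies in the reachable set $R$ and is having its price increased. The mechanism is as follows: in a maximum matching every item reachable from an unmatched bidder is saturated, so if some $S$-item were in $R$ the tight counting above would force $d^*(S)$ to fill $S$ exactly; but tracing the alternating path from an unmatched bidder into that $S$-item produces a bidder that has a $p$-MBB item in $S$ (hence lies in $d^*(S)$ by the containment step) yet is matched to an item \emph{outside} $S$, contradicting that all of $d^*(S)$ is matched inside $S$. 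The difficulty is purely matching-theoretic bookkeeping: the algorithm only guarantees that the \emph{whole} reachable set $R$ is over-demanded ($|d(R)|>c(R)$), whereas I need to localize the failure of over-demand to the maximal-ratio sub-block $S\subseteq R$, and I must run the integral counting at exactly the first-crossing instant so that the inequalities $p_k<fp^*_k$ for $k\notin S$ are strict and the reachability edges into $S$ are precisely the ones that disappear the moment $S$ is raised. Carrying out this König--Hall deficiency argument on the subgraph induced by $S$ and the bidders whose $p$-MBB meets $S$ completes the contradiction, showing that no price ever reaches $fp^*_j$ and hence $q\le f p^*$.
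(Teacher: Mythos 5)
Your high-level architecture coincides with the paper's: fix $f$, run a first-crossing argument at the first instant some price sits at its bound $fp^*_j$, let $S$ be the items at ratio $f$, show via the strict bang-per-buck comparison (for $k\notin S$, $p_k<fp^*_k$ forces strict preference) that every bidder currently demanding $S$ has all her $p^*$-MBB items inside $S$, and conclude $S\cap R=\emptyset$ so that prices of $S$ never rise further. Your containment step is exactly the paper's ($T_q\subseteq T_{fp^*}$ in its notation). Where you genuinely diverge is the capacity-feasibility ingredient: the paper proves it \emph{constructively} --- it takes a PF allocation, deletes cycles from the bidder--item graph by reallocating along them, roots each tree of the resulting forest at a bidder, and assigns each bidder to a child item if one exists and otherwise to its parent, so that each item $j$ receives at most $\lceil p^*_j\rceil\le\lfloor fp^*_j\rfloor$ bidders. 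This produces one explicit valid assignment of \emph{all} bidders along MBB edges at prices $fp^*$, whose restriction to $T_q$ immediately gives a simultaneous assignment of all of $T_q$ into $S$. You replace this with the budget-accounting inequality $|d^*(S)|\le\sum_{j\in S}p^*_j$, which is more elementary --- but it yields only a cardinality bound, and that is precisely where your proof has a gap.

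The gap is in your final step, and your sketched contradiction is wrong as stated. From $|T_q|\le c(S)$ you infer that ``all of $d^*(S)$ is matched inside $S$,'' but nothing establishes this, and it can fail for the algorithm's maximum assignment: at the critical instant a bidder of $T_q$ may have a current-price MBB edge exactly \emph{tied} with an item outside $S$ (possible since $p_k<fp^*_k$ only gives a one-sided inequality), so the maximum assignment may legitimately match her outside $S$, or leave her unmatched; the bidder your alternating path produces, with an MBB edge into $S$ but matched outside it, is then no contradiction at all. Likewise ``the tight counting would force $d^*(S)$ to fill $S$ exactly'' is a non sequitur: saturation holds for $S\cap R$ but says nothing about $S\setminus R$. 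What you actually need is a \emph{simultaneous} matching of all of $T_q$ into $S$, which requires Hall's condition on every subset, not the top-level count. Your approach is salvageable, because your own inequality localizes: for $T'\subseteq T_q$, all $p^*$-MBB items of $T'$ lie in $N(T')\subseteq S$, so $|T'|\le|d^*(N(T'))|\le\sum_{j\in N(T')}p^*_j\le c(N(T'))$, and one checks that within $S$ the current MBB edges of $T_q$ coincide with the $p^*$-MBB edges; the capacitated Hall theorem then matches all of $T_q$ into $S$. Even granted that matching, concluding $S\cap R=\emptyset$ needs one further exchange step (which, to be fair, the paper also leaves implicit): rematch all of $T_q$ into $S$ per the Hall matching, keep every other bidder's match (no bidder outside $T_q$ occupies an $S$-slot, since assignment edges are MBB edges), and if an alternating path from an unmatched bidder entered $S$, shift matches along its prefix strictly outside $S$ to match one additional bidder --- contradicting the maximality of the Step~1 assignment. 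Without the subset-level Hall condition and this exchange argument, your proof does not close.
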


\begin{proof}

 First note that at prices $fp^*$, the MBB items for each bidder
are the same as at prices $p^*$. It is not difficult to see that at prices $fp^*$ every bidder can
be allocated to exactly one item from among her
MBB items such that the number of bidders allocated to an item $e_j$ is less than or equal to $fp^*_j$. To show this, consider a PF allocation.
Form the following graph on items and bidders --- add an edge between a bidder and an item if a portion of this item is assigned to this bidder in the PF solution.
If there exists a cycle in this graph, one can remove an edge in this cycle by reallocating along the cycle while maintaining
the utility of every bidder. Hence there is a PF allocation in which this graph is a forest. Now for a given tree, root it at an arbitrary bidder. 
For each Bidder $b$ in this tree, assign it to one of its child items,
if any, and otherwise to its parent. 
The result is that for each item $e_j$, at most $\lceil{p_j}\rceil$ bidders will be assigned to it.

Now, suppose that some $q_j > f p^*_j$. Consider the first time $t$ at which some price $q_i$ starts to increase
from $fp^*_i$.
Let $S$ be the set of items $e_i$ whose price is currently  $fp^*_i$. We will show that no item in set $S$ will be
 part of the set $R$, and hence the prices of these items will not increase. Let $T_q$ and $T_{fp^*}$ be the sets 
 of bidders who have edges to some item in the set $S$ at the current prices and at prices $fp^*$, respectively. 
 Clearly $T_{q} \subseteq T_{fp^*}$. Also suppose a bidder $b \in T_{fp^*}$ has an edge to an item outside set $S$ 
 at prices $fp^*$; this means that $b \not \in T_{q}$ at the current prices as $b$ will strictly prefer the item 
 outside the set $S$. Thus if $b \in T_{q}$, this implies that $b$ has no edges outside the set $S$ at prices $fp^*$, 
 and so $b$ was allocated to some item in set $S$ at prices $fp^*$. Since we know that at prices  $fp^*$ all the 
 bidders can be allocated to some item, this implies that $|T_{q}| \leq c(S)$. Thus even at current prices, all the 
 bidders in $T_{q}$ can be allocated to items in set $S$, and hence no item in set $S$ can be part of the set $R$.
\end{proof}

\paragraph{Truthfulness.}
We argue by contradiction.
Suppose that some Bidder $b$ were not truthful in the above algorithm, which
we name algorithm $\cal A$.
First, we consider an alternate algorithm $\cal A'$, and show that it is
a dominant strategy for $b$ to be
truthful in $\cal A'$.
Then we show that $\cal A$ and $\cal A'$ produce the same outcomes, and
consequently $b$ should also be truthful in $\cal A$.
$\cal A'$ proceeds as follows.
It begins by running algorithm $\cal A$ but with $b$ absent
(the first run), yielding prices $p'$.
Then it runs $\cal A$ on all $n$ bidders, but starting from prices $p'$
(the second run).

\begin{lemma}
\label{lem:var-alg-thruth}
$b$ is truthful in algorithm $\cal A'$.
\end{lemma}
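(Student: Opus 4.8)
The plan is to exploit the fact that the price vector $p'$ produced by the first run of $\mathcal{A}'$ is completely independent of $b$'s reported valuation, since $b$ is absent from that run; $b$'s only influence is on the second run, which starts from the fixed prices $p'$ and, like $\mathcal{A}$, only ever raises prices. I would first record the exact form of $b$'s payoff. When $b$ is truthful the second run terminates (Step 3) with all $n$ bidders matched inside the demand graph $D(q)$ at the final prices $q$; hence $b$ is matched to some MBB item $e_k$ and receives a fraction $1/q_k$ of it, for a utility $u^\ast := v_{bk}/q_k = \max_j v_{bj}/q_j$. For an arbitrary report $\hat v_b$ the run again ends with $b$ matched to a single item $e_\ell$ (her \emph{reported} MBB) at final price $\hat q_\ell$, so her \emph{true} utility is $v_{b\ell}/\hat q_\ell$. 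I would then reduce truthfulness to the single inequality $\hat q_\ell \ge q_\ell$: granting this, $v_{b\ell}/\hat q_\ell \le v_{b\ell}/q_\ell \le \max_j v_{bj}/q_j = u^\ast$, so no deviation can help.

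Before attacking that inequality I would isolate one easy but instructive bound. Because the second run never lowers a price below $p'$, under \emph{any} report $b$ ends on an item $e_\ell$ with $\hat q_\ell \ge p'_\ell$, whence her utility is at most $v_{b\ell}/p'_\ell \le \max_j v_{bj}/p'_j$. This ceiling is itself independent of $b$'s report, confirming that $b$ is essentially a price taker; but it is not by itself tight, since truthful play need not attain $\max_j v_{bj}/p'_j$ (as simple instances already show), which is exactly why the sharper, item-specific comparison is required.

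The heart of the argument, and the step I expect to be the main obstacle, is the price-monotonicity claim $\hat q_\ell \ge q_\ell$: \emph{the price of the item that $b$ ultimately buys can never be pushed below its truthful-run value by a deviation.} The intuition is demand monotonicity familiar from Fisher-type markets. In the truthful run $b$ settles on $e_k$, so the price $q_\ell$ of $e_\ell$ is sustained by the competition of the other (fixed) bidders, whereas to \emph{obtain} $e_\ell$ under $\hat v_b$ bidder $b$ must report $e_\ell$ among her highest bang-per-buck items and therefore places demand pressure on $e_\ell$ at the end; adding a demander to $e_\ell$ can only raise the minimal price at which $e_\ell$ still has room. Making this rigorous is delicate because both runs share the same starting prices $p'$ and the same other bidders, yet their ascending trajectories — the sequence of reachable sets $R$, the capacity jumps at integral prices, and the alternating-path rematchings — may diverge once $b$'s reported demand graph differs. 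I would therefore prove $\hat q_\ell \ge q_\ell$ by a direct comparison of the two runs, tracking the sets $d(R)$ and the alternating paths certifying overdemand, and arguing that a configuration with $\hat q_\ell < q_\ell$ should already admit a valid assignment accommodating $b$ on $e_\ell$ at that lower price in the truthful instance, contradicting that the truthful run had to raise $e_\ell$ past $\hat q_\ell$. Equivalently, one may show that for each fixed report the second run computes the component-wise minimal price vector $\ge p'$ supporting a full matching in the corresponding demand graph, and then compare these minimal vectors along the coordinate $e_\ell$ on which $b$'s demand is present in the deviating run.

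Finally, I would assemble the pieces: the reduction of the first paragraph converts $\hat q_\ell \ge q_\ell$ into $v_{b\ell}/\hat q_\ell \le u^\ast$ for every deviation $\hat v_b$, so truthful reporting is a dominant strategy for $b$ in $\mathcal{A}'$, proving Lemma~\ref{lem:var-alg-thruth}. Boundedness of the prices involved, needed so that both runs terminate and the quantities above are well defined, follows from Lemma~\ref{lem:price-rounding}.
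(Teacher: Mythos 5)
Your overall architecture is exactly the paper's: the reduction of truthfulness to the single price-monotonicity inequality $\hat q_\ell \ge q_\ell$ is precisely what the paper's proof establishes, and your ``price-taker'' observation disposes of items outside the reachable set $R$ just as the paper's remark that their prices ``are completely defined by other bidders'' does --- indeed your version handles this slightly more cleanly, since for $\ell$ outside the truthful run's final $R$ one has $q_\ell = p'_\ell \le \hat q_\ell$ automatically, because prices never drop below $p'$. Your payoff bookkeeping is also correct, including the important caveat that $e_\ell$ is only a \emph{reported} MBB item, so $b$'s true utility under a deviation is $v_{b\ell}/\hat q_\ell$ while the truthful utility is $\max_j v_{bj}/q_j$.

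The genuine gap is that the monotonicity claim, which you correctly identify as ``the heart of the argument,'' is left as a plan rather than proved --- and that claim essentially \emph{is} the lemma, so as submitted nothing is established. The paper closes it in a few lines, along the contradiction you gesture at: suppose the lying run matches $b$ to $e_\ell \in R$ at price $\hat q_\ell < q_\ell$ via an alternating path from $b$ ending at some item $e'$ with spare capacity. Consider the truthful run at the first moment the price of $e_\ell$ reaches $\hat q_\ell$ (prices rise continuously from the common start $p'$). If that path were present in the truthful demand graph at that moment, $b$ would be matched then and the run would stop with $e_\ell$ priced $\hat q_\ell$, contradicting $\hat q_\ell < q_\ell$; so the path is absent, and since all edges other than $(b,e_\ell)$ belong to other bidders and depend only on prices, some item on the path must then carry a strictly higher price in the truthful run than at the end of the lying run. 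But in the second run prices rise only on items reachable from the unmatched $b$, so that higher price means the truthful run had already driven prices along this region, which in turn implies $b$ was matchable via a path ending at $e'$ \emph{before} $e_\ell$ reached $\hat q_\ell$ --- contradiction. Two cautions if you execute your plan: first, the initial edge $(b,e_\ell)$ of the lying-run path is a reported-MBB edge, so ``the same path in the truthful scenario'' needs explicit justification (the paper itself elides this); second, your alternative route via ``the second run computes the componentwise minimal price vector $\ge p'$ supporting a full matching'' presupposes a lattice/uniqueness property of such price vectors that is nowhere established in the paper and is itself roughly as hard as the lemma --- the direct two-run comparison, in the style of the proofs of Lemma~\ref{lem:price-rounding} and Lemma~\ref{lem:equl-outcomes}, is the intended instrument. (Also, termination follows from the counting argument in the running-time paragraph, not from Lemma~\ref{lem:price-rounding}.)
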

\begin{proof}

Suppose the second run of the algorithm ends when $b$ can be matched using an alternating path that ends at an item $e$. 
Suppose at this point, the price of an item $e_j\in R$ is $p_j$. It is easy to see that bidder $b$ has no incentives to 
lie to obtain an item that is not in $R$ as the prices of these items is completely defined by other bidders. Suppose 
that by lying, Bidder $b$ is able to get an item $e_j\in R$ at a price $p'_j < p_j$. Suppose that this happens with an 
alternating path that starts at $b$ and ends at some item $e'$. Now if this path existed in the truthful scenario when 
the price of item $e_j$ reaches $p'_j$, then even in the truthful scenario $b$ would have been matched when the price of 
item $e_j$ is $p'_j$. Thus this path doesn't exist in the truthful scenario when the price of item $e_j$ is $p'_j$. But why 
does this path not exist in the truthful scenario? It must be the case that some item on this path has a higher price 
than the price in the lying scenario. A higher price on this item means that in the truthful scenario item $e_j$ would 
have been matched via some alternating path ending at $e'$ before the price of item $e_j$ reached $p'_j$, a contradiction.
\end{proof}

\begin{lemma}
\label{lem:equl-outcomes}
$\cal A$ and $\cal A'$ have the same outcome.
\end{lemma}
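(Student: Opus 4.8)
The plan is to characterize the price vector that SDM computes and then show both $\cal A$ and $\cal A'$ compute it. For a fixed bidder set, call a price vector $p \ge \mathbf{1}$ \emph{clearing} if the demand graph $D(p)$ admits a valid assignment matching \emph{every} bidder, i.e.\ each bidder to one of her MBB items with at most $c_j=\lfloor p_j\rfloor$ bidders on item $e_j$. The termination test of SDM (Step 1 matching all bidders) is exactly that the current prices are clearing, and Step 2 only ever raises prices of items in $R$, the items reachable from the currently unmatched bidders $U$. The goal is therefore to prove two facts: (i) the clearing vectors $\ge \mathbf{1}$ for a given bidder set possess a coordinatewise-least element, SDM started from $\mathbf{1}$ outputs it, and SDM started from \emph{any} valid $s$ with $\mathbf{1}\le s\le q$ again terminates at this same least element $q$; and (ii) the prices $p'$ produced by the first run (on $N\setminus\{b\}$) satisfy $p'\le q$, where $q$ is the output of $\cal A$ on $N$. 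Granting both, the second run of $\cal A'$ is SDM on $N$ started from $s=p'$, which by (i) outputs $q$; hence $\cal A$ and $\cal A'$ end at identical prices. Since at clearing prices there are no unmatched bidders and each matched bidder receives value $\max_j v_{ij}/q_j$ (her bang-per-buck at one of her MBB items, independent of which maximum matching is chosen), the two runs induce the same prices and the same bidder utilities.

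Fact (ii) follows cleanly from monotonicity in the bidder set together with the least-element structure of (i). A clearing vector for $N$ remains clearing for $N\setminus\{b\}$ (just delete $b$ from the valid assignment), so the set of clearing vectors for $N$ is contained in that for $N\setminus\{b\}$. As $q$ is clearing for $N$, it lies in the larger set as well, and therefore dominates the least element $p'$ of that larger set; this gives $p'\le q$. (Here $p'$ is obtained by applying the from-$\mathbf{1}$ part of (i) to the instance $N\setminus\{b\}$.) Note also that Lemma~\ref{lem:price-rounding} already certifies the existence of a clearing vector near the PF prices, so none of these least elements is vacuous.

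The heart of the matter, and the main obstacle, is the start-independence in (i): that SDM never \emph{overshoots} the least clearing vector, despite the two competing effects of a price increase, raising $c_j=\lfloor p_j\rfloor$ (which aids matching at $e_j$) while eroding $e_j$'s bang-per-buck and possibly deleting MBB edges. I would first establish that the clearing vectors $\ge\mathbf{1}$ form a lattice under the coordinatewise order (the standard structure of minimal Walrasian/market-clearing prices), so that a least clearing vector $q$ exists. To show SDM lands exactly on $q$ from any valid $s\le q$, I would maintain the invariant that the current prices stay coordinatewise $\le q$; the delicate step is to prove that whenever $U\ne\emptyset$ and $p\le q$, every item in $R$ still has strict slack ($p_j<q_j$), so that the minimal raise of Step 2 cannot push a coordinate above $q$. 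Intuitively this holds because $R$ is overdemanded (when $U\ne\emptyset$ one has $|d(R)|>c(R)$), whereas at $q$ these same bidders can all be matched within capacity, so some price in $R$ must still rise; formalizing it across the integer capacity jumps and the MBB edge deletions is the crux. Given the invariant, SDM cannot terminate before reaching a clearing vector and, by minimality of $q$, must stop precisely at $q$. Applying this with $s=\mathbf{1}$ shows $\cal A$ outputs $q$, and with $s=p'\le q$ shows the second run of $\cal A'$ outputs $q$ as well, which together with the utility argument of the first paragraph completes the proof.
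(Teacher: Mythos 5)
Your architecture is sound, and it is essentially a repackaging of the paper's own argument: the paper proves exactly a ``no-overshoot'' (price-domination) lemma --- a run of SDM whose current prices are dominated coordinatewise by a price vector at which \emph{all} bidders can be validly matched never pushes any price past that vector --- and applies it in both directions (the run of $\cal A$ against the final prices of $\cal A'$, and both runs of $\cal A'$ against the final prices of $\cal A$, the first run being handled because a clearing vector for $N$ remains clearing for $N\setminus\{b\}$, which is your fact (ii)). Your least-element/lattice framing is dispensable: once no-overshoot is proved, the output of SDM from $\mathbf{1}$ is automatically dominated by every clearing vector and is itself clearing, so it \emph{is} the least element; you therefore never need (and never justify) the lattice assertion, which is the right instinct since the paper does not use it either.

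The genuine gap is that you have reduced the lemma to precisely the step that constitutes the paper's proof, and then declared that step ``the crux'' without proving it. The missing argument (given in the paper both here and, in nearly identical form, for Lemma~\ref{lem:price-rounding}) runs as follows: suppose some price crosses the target vector, and consider the first instant at which the price of some item equals its target while none exceeds it; let $S$ be the set of items currently at their target prices. A bidder with a current MBB edge into $S$ cannot have, at the target prices, an MBB edge to any item outside $S$: such an item is currently weakly cheaper than at the target while $S$-items cost the same, so at current prices it would be strictly preferred. Hence every bidder in $T'$ (current edges into $S$) is matched inside $S$ by the clearing assignment at the target prices, giving $|T'|\leq c(S)$ together with a valid assignment of all of $T'$ into $S$ at current prices; therefore no item of $S$ is reachable from an unmatched bidder, i.e.\ $S\cap R=\emptyset$, contradicting that an $S$-item's price was about to rise. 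Note also that your formulation of the invariant --- ``every item in $R$ has strict slack, so the minimal raise of Step 2 cannot push a coordinate above $q$'' --- is not quite right as stated: the stopping events (a)/(b) of Step 2 are not triggered at $q_j$, so strict slack at the start of a raise does not by itself prevent the continuous multiplicative increase from sailing past $q_j$ mid-raise. What the first-crossing argument actually delivers, and what you need, is that at the instant a price reaches its target the item cannot belong to $R$ and is therefore frozen. Your intuition ($|d(R)|>c(R)$ versus matchability within capacity at $q$) points in the right direction, but without the $T'\subseteq T$ strict-preference argument it remains an intuition rather than a proof.
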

The proof of this Lemma is similar to that of Lemma~\ref{lem:price-rounding} and is deferred to  Appendix~\ref{app:missing}.
\begin{corollary}
\label{cor:A-truthful}
$\cal A$ is truthful.
\end{corollary}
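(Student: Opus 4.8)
The plan is to derive the corollary immediately from Lemmas~\ref{lem:var-alg-thruth} and~\ref{lem:equl-outcomes}, with no further computation required. I would fix an arbitrary bidder $b$; since $b$ is arbitrary, it suffices to show that reporting truthfully is a dominant strategy for $b$ in the original algorithm $\mathcal{A}$. Throughout, I hold the reports of all bidders other than $b$ fixed, and regard the outcome of each algorithm---and hence the allocation delivered to $b$, and thus $b$'s utility---as a function of $b$'s reported valuation $r_b$.

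The key step is to observe that Lemma~\ref{lem:equl-outcomes} is a statement about the two algorithms viewed as functions of the \emph{reported} profile: for every report profile, and in particular for every choice of $r_b$ whether truthful or not, $\mathcal{A}$ and $\mathcal{A}'$ produce the same allocation. Consequently $b$'s utility under $\mathcal{A}$ on report $r_b$ equals its utility under $\mathcal{A}'$ on the same report $r_b$. By Lemma~\ref{lem:var-alg-thruth}, truthful reporting maximizes $b$'s utility in $\mathcal{A}'$ over all admissible $r_b$; transporting this equality of utilities back across to $\mathcal{A}$, truthful reporting also maximizes $b$'s utility in $\mathcal{A}$. As this holds against every fixed profile of the other bidders and for every $b$, truthful reporting is a dominant strategy in $\mathcal{A}$, which is exactly the assertion that $\mathcal{A}$ is truthful.

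The one point that requires care---and the reason the two-algorithm device is introduced in the first place---is that Lemma~\ref{lem:equl-outcomes} must be applied to the untruthful reports of $b$ as well as to the truthful one; this is legitimate precisely because that lemma asserts outcome equality for \emph{all} report profiles, carrying no truthfulness hypothesis of its own. I therefore expect the corollary itself to present no real obstacle: all the substance resides in the two lemmas, with Lemma~\ref{lem:var-alg-thruth} handling incentives inside the auxiliary algorithm $\mathcal{A}'$, where $b$'s report cannot affect the prices $p'$ fixed by the first ($b$-absent) run, and Lemma~\ref{lem:equl-outcomes} certifying that $\mathcal{A}'$ is behaviorally indistinguishable from $\mathcal{A}$. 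The composition of these two facts yields the result.
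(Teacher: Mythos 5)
Your proposal is correct and matches the paper's own argument exactly: the corollary is obtained by composing Lemma~\ref{lem:var-alg-thruth} with Lemma~\ref{lem:equl-outcomes}, applying the outcome-equality lemma to every report profile of $b$ (truthful or not) so that the dominant-strategy property transfers from $\cal A'$ to $\cal A$. Your explicit remark that Lemma~\ref{lem:equl-outcomes} carries no truthfulness hypothesis is precisely the point the paper leaves implicit.
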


\begin{theorem}
The SDM mechanism achieves an approximation factor of $\rho = \min_j\left(p^*_j/\lceil{p^*_j}\rceil\right)$.
If $\min_j p^*_j = k$, this is an approximation factor of at least $\frac{k}{k+1}$.
\end{theorem}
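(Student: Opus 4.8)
The plan is to convert the approximation ratio into a statement purely about prices and then feed it into Lemma~\ref{lem:price-rounding}. The central observation is that, because every bidder is endowed with a unit budget, each bidder's utility in both allocations is exactly her \emph{bang per buck} evaluated at the relevant price vector. Concretely, at the PF prices $p^*$ each bidder $i$ spends her entire unit budget on maximum-bang-per-buck items, so $v_i(x_{PF}) = \alpha_i := \max_j v_{ij}/p^*_j$. In the SDM outcome, bidder $i$ is matched to one item $e_j$ that is an MBB item of hers at the \emph{final} prices $q$, and Step~3 awards her a fraction $1/q_j$ of it; hence $v_i(x) = v_{ij}/q_j = \beta_i := \max_j v_{ij}/q_j$. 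Thus the per-bidder ratio $v_i(x)/v_i(x_{PF})$ equals $\beta_i/\alpha_i$, the ratio of bang-per-buck values at the two price vectors.

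Next I would control this ratio using Lemma~\ref{lem:price-rounding}, which gives $q \le f\,p^*$ with $f = \max_j \lceil p^*_j\rceil/p^*_j$. Since raising every price by at most a factor $f$ shrinks the bang per buck of every item by at most a factor $f$, for each bidder $i$ and each item $j$ we have $v_{ij}/q_j \ge v_{ij}/(f p^*_j)$; taking the maximum over $j$ yields $\beta_i \ge \alpha_i/f$. Therefore $v_i(x)/v_i(x_{PF}) = \beta_i/\alpha_i \ge 1/f$ for \emph{every} bidder $i$, and so $\rho(x) = \min_i v_i(x)/v_i(x_{PF}) \ge 1/f = \min_j\left(p^*_j/\lceil p^*_j\rceil\right)$, which is the first claim.

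For the second claim I would analyze the scalar function $p \mapsto p/\lceil p\rceil$ and show $p/\lceil p\rceil \ge k/(k+1)$ whenever $p \ge k$. Writing $p \in (m, m+1]$ for an integer $m$, the ratio equals $p/(m+1)$. If $m \ge k$, then $p/(m+1) > m/(m+1) \ge k/(k+1)$ since $x/(x+1)$ is increasing; otherwise $m < k$, so $p \ge k$ forces $p/(m+1) \ge k/(m+1) \ge k/(k+1)$, the last step using $m+1 \le k+1$. Applying this to each $p^*_j \ge \min_j p^*_j = k$ and taking the minimum gives $\min_j\left(p^*_j/\lceil p^*_j\rceil\right) \ge k/(k+1)$, completing the proof.

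I expect the genuinely substantive work to already be packaged inside Lemma~\ref{lem:price-rounding}; the remaining steps are the bookkeeping that identifies equilibrium utilities with bang-per-buck values and the elementary monotonicity argument. The one place warranting care is the first step: I must verify that the endowment/MBB characterization of utilities holds for the PF allocation (it does, since the PF allocation is the unit-budget market equilibrium captured by Eisenberg--Gale) and that the SDM matching assigns each bidder one of her MBB items at the \emph{final} prices $q$, not at some intermediate prices, so that the identity $v_i(x) = \beta_i$ holds exactly.
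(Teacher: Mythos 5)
Your proof is correct and follows essentially the same route as the paper's: both reduce the theorem to Lemma~\ref{lem:price-rounding}, using $q \le f\,p^*$ with $f = \max_j \lceil p^*_j\rceil/p^*_j$ to bound each matched bidder's loss, and then minimize over bidders. If anything, your version is slightly more careful at one point --- the paper asserts that bidder $i$'s PF utility \emph{equals} $v_{ij}/p^*_j$ for her SDM-matched item $e_j$, which implicitly assumes $e_j$ is also MBB at prices $p^*$, whereas your inequality $\max_{j'} v_{ij'}/q_{j'} \ge (1/f)\max_{j'} v_{ij'}/p^*_{j'}$ needs no such assumption --- and you also spell out the elementary monotonicity argument for the $k/(k+1)$ bound, which the paper leaves implicit.
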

\begin{proof}
If a bidder is allocated a portion of item $e_j$, she receives a $1/q_j$ fraction.
But by Lemma~\ref{lem:price-rounding},
$ 1/q_j \ge 1/(fp^*_j) \ge 1/\lceil p^*_j \rceil$.
In value, her PF allocation equals a $1/p^*_j$ fraction of item $e_j$.
Thus she achieves an approximation factor of $p^*_j/\lceil p^*_j \rceil$.
The result follows on minimizing over all bidders.
\end{proof}

\bibliographystyle{amsplain}
\bibliography{MARA}



\appendix
\section{Omitted Proofs}\label{app:missing}

\begin{proof}[of Theorem \ref{thm:PFeff}]
Assume that the items are ordered in a decreasing fashion w.r.t.\ the $\frac{v_{Aj}}{v_{Bj}}$ ratio. It is easy to verify that
$x^*$ allocates each item $j$ either to agent $A$, or to agent $B$, depending on whether $v_{Aj}$ or $v_{Bj}$ is larger.
We assume that items valued equally by both are allocated to agent $B$, so $x^*$ can be thought of as defining an item
in the ordering mentioned above such that all the items preceding this item are allocated to agent $A$ and all the rest to agent
$B$. This type of ``frontier'' within this ordering, defining each player's allocation, actually occurs with all Pareto 
efficient allocations, so the PF allocation also defines such a frontier\footnote{If the PF solution dictates that an item has
to be shared between the two agents, we can just split it into two items such that each one is allocated to a different agent without affecting the ordering.}.
W.l.o.g.\ we assume that the frontier of PF comes before the frontier of the social welfare maximizing allocation in the items' ordering.

These two frontiers separate the set of items into three groups.
Slightly abusing notation, let $v_{Ag}$ and $v_{Bg}$ denote the valuations of agents $A$ and $B$ respectively 
for group $g\in \{1,2,3\}$, where the group number $g$ is consistent with the initial item ordering.
Note that $v_{A1}/v_{B1} \geq v_{A2}/v_{B2}$,
and  $v_{A2}/v_{B2}\geq v_{A3}/v_{B3}$. The ratio that we are studying can thus be rewritten as follows:

\begin{equation}\label{eq:ratioApp}
\frac{SW(x_{PF})}{SW(x^*)} ~=~ \frac{v_{A1}+v_{B2}+v_{B3}}{v_{A1}+v_{A2}+v_{B3}} ~=~ 1 - \frac{v_{A2}-v_{B2}}{v_{A1}+v_{A2}+v_{B3}}.
\end{equation}

Let $v_{A2} = kv_{B2}$ for some $k>1$. Then, $v_{A1}\geq kv_{B1}$.
Also, by the definition of PF,
since the PF solution allocates the second group of items to agent $B$, then
$\frac{v_{A2}}{v_{A1}} \leq \frac{v_{B2}}{v_{B2}+v_{B3}}$ which, after replacing for 
$v_{A2}$, gives $v_{A1}\geq k(v_{B2}+v_{B3})$.
Since $v_{B2}+v_{B3}=1-v_{B1}$, this inequality can be rewritten as $v_{A1}\geq k (1-v_{B1})$. Adding this inequality to
$v_{A1}\geq kv_{B1}$ yields
$v_{A1}\geq k/2$. Also, according to the definition of $k$, we get that $k(v_{B1}+v_{B2})\leq v_{A1}+v_{A2} \leq 1$, or $k(1-v_{B3})\leq 1$, which yields
that $v_{B3}\geq (k-1)/k$. Using these lower bounds for $v_{A1}$ and $v_{B3}$ in Equation (\ref{eq:ratioApp}), we get:

\begin{equation}\label{ineq:ratioApp}
\frac{SW(x_{PF})}{SW(x^*)} ~\geq~ 1 - \frac{(k-1)v_{B2}}{\frac{k}{2}+kv_{B2}+\frac{k-1}{k}} ~=~ 1 - \frac{2k(k-1)v_{B2}}{2k^2v_{B2}+k^2+2k-2} .
\end{equation}

The lower bound implied by this inequality is minimized when the fraction on the right hand side is maximized. Assuming that $v_{B2}$ is 
fixed,
we take the partial derivative w.r.t.\ $k$, which is equal to:
\begin{equation*}
 \left( \frac{2k(k-1)v_{B2}}{2k^2v_{B2}+k^2+2k-2} \right)_k' ~=~ \frac{\left( 2k^2 v_{B2} + 3k^2 -4k +2 \right)2v_{B2}}{(2k^2v_{B2}+k^2+2k-2)^2}.
\end{equation*}
It is easy to verify that this is positive because $3k^2-4k +2>0$ for any value of $k$. This means that
for any value of $v_{B2}$, the fraction is maximized 
when $k$ is as large
as possible. But we know that $kv_{B2} = v_{A2} \leq 1-v_{A1}$,
and since $v_{A1}\geq k/2$, this yields $k \leq \frac{1}{v_{B2}+0.5}$. 
Thus to maximize the fraction we
let $k = \frac{1}{v_{B2}+0.5}$, or
$v_{B2} = \frac{2-k}{2k}$. 
Substituting for
$v_{B2}$
in
Inequality (\ref{ineq:ratioApp}) 
yields:
\begin{equation*}
\frac{SW(x_{PF})}{SW(x^*)} ~\geq~ 1- \frac{-k^2+3k-2}{4k-2},
\end{equation*}
with the right hand side minimized for 
$k=\frac{1+\sqrt{3}}{2}$,
which proves the theorem.
\end{proof}

\begin{proof}[of Lemma \ref{lem:Hybrid}]
Note that the dictatorial mechanism always provides a social welfare of at least $1$ if used (each agent gets a value of exactly $0.5$).
The second mechanism provides a social welfare of $2v_Av_B$. We can therefore express the ratio of the social welfare  of an allocation $x_m$ that
is the outcome of combining these two mechanisms, over the social welfare of the PF allocation as $\frac{SW(x_m)}{SW(x_{PF})} = \frac{0.5 + v_Av_B}{v_A+v_B}$.

We only need to show that $\frac{0.5 + v_Av_B}{v_A+v_B}\geq \frac{2}{3}$ for all $v_A,v_B$ combinations in $[0.5 , 1]^2$
(since it must be the case that $v_A, v_B \ge 0.5$ in the PF allocation). We treat $v_B$
as a given constant and take the derivative w.r.t.\ $v_A$. This gives a derivative of $\frac{v_B^2-0.5}{(v_A+v_B)^2}$, which is negative for
$v_B^2 < 0.5$, positive for $v_B^2 > 0.5$, and zero for $v_B^2 = 0.5$.
This means that, given a value of $v_B<\sqrt{0.5}\approx 0.7$, the ratio is minimized for $v_A=1$,
and given a value of $v_B>\sqrt{0.5}\approx 0.7$, the ratio is minimized for $v_A=0.5$. Given the symmetry of the ratio w.r.t.\ to $v_A$ and $v_B$, the exact
same argument applies for the values of $v_B$ minimizing the ratio given a value of $v_A$. We 
conclude that the ratio is 
minimized
either when
$v_A=0.5$ and $v_B=1$
or when $v_A=1$ and $v_B=0.5$.
In either case, the ratio evaluates to $\frac 23$, proving the theorem.
\end{proof}

\begin{proof}[of Lemma \ref{lem:TwoItems}]

Let the PF prices be $p_t$ and $p_b$, and note that, since $p_t +p_b = n$,
either both prices are
integers or neither is. If both prices are integers, it is easy to see that the following allocation
is PF:
the first $p_t$ bidders are assigned a $1/p_t$ fraction of item $t$ and the remaining $p_b$ bidders are assigned a $1/p_b$
fraction of item $b$. If neither of the prices is an integer, then we get a PF allocation by giving each of the first
$\lfloor p_t \rfloor$ bidders (the {\sc Top} bidders) a $1/p_t$ fraction of item $t$, Bidder $\lceil p_t \rceil$  a
$\frac{p_t -\lfloor p_t \rfloor}{p_t}$
fraction of item $t$ and a $\frac{p_b -\lfloor p_b \rfloor}{p_b}$ fraction of item $b$, and each of the remaining bidders (the {\sc Bottom} bidders) a
fraction of $1/p_b$ of item $b$.

Then, if a bidder being
allocated a fraction of both the top and the bottom items in the PF solution exists and
her valuation for the top item is $v$, notice that in that PF solution
the allocation of every  {\sc Top} bidder comprises a fraction
of the top item alone; similarly, the allocations for {\sc Bottom} bidders
come from the bottom item alone.
If there are $k-1$  {\sc Top} bidders, then
the price of item $t$ in the PF solution would be equal to $k-1+x$, and that of
item $b$ would be equal to $n-k+1-x$, where $x<1$ is the amount Bidder $k$ spends on item $t$.
Notice that since Bidder $k$ is interested in both items at these prices, this means
that her valuations are the ones defining the ratio between the two prices (henceforth, we call her the
\emph{Ratio Defining Bidder}, denoted $R_b$), i.e.:
\begin{equation}\label{valueRange}
 \frac{k-1+x}{n-k+1-x}=v
\end{equation}

Thus:
\begin{equation}
 x=\frac{(n-k+1)v-(k-1)}{v+1}.
\end{equation}
Every {\sc Top} bidder gets a fraction
of item $t$ equal to the ratio of her budget over
the final price of that item, i.e.\ a fraction equal to $\frac{v+1}{vn}$.
Similarly, every {\sc Bottom} bidder
gets a fraction of the bottom item equal to $\frac{v+1}{n}$.
Finally, $R_b$ gets a fraction of item $t$
equal to $\frac{n-k+1}{n}-\frac{(k-1)}{vn}$ and a fraction of item $b$ equal to $\frac{k}{n}-\frac{(n-k)v}{n}$.
Notice that if $R_b$ were truthful,
then this allocation would offer her utility equal to $\frac{v+1}{n}$, equal to that of every {\sc Bottom}
bidder and of every {\sc Top} bidder.
This utility is also equal to the utility she would obtain from a $\frac{v+1}{vn}$ fraction of the top item, or a
$\frac{v+1}{n}$ fraction of the bottom item.

\end{proof}

\begin{proof}[of Lemma~\ref{lem:SIMtruth}]
To verify that this mechanism is truthful, we consider the strategies of any {\sc Top} bidder,
any {\sc Bottom} bidder and those of a possible ratio defining bidder.

\smallskip
\noindent
{\sc Top} bidders (the same arguments apply for {\sc Bottom} bidders as well):
 \begin{itemize}
\item
A {\sc Top} bidder overbids: then her allocation is unchanged.
\item
A {\sc Top} bidder underbids but continues to overbid $R_b$:
then her allocation is unchanged.
\item
A {\sc Top} bidder underbids $R_b$ and $R_b$ had been forced to
share the \emph{top} item: then the only possible change is that the {\sc Top} bidder
ends up sharing the bottom item which is something not
even  $R_b$ wanted.
\item
A {\sc Top} bidder underbids $R_b$ and $R_b$ had been forced to
share the \emph{bottom} item. If $R_b$ does not move then the $\rho$ for the
{\sc Bottom} bidders is reduced, and hence it is reduced for everyone;
while if $R_b$ moves, then the best option for the {\sc Top} Bidder is to
become the ratio defining bidder, and the only choice with the underbid is for the
{\sc Top} Bidder to be sharing the bottom item.
But her $\rho$ ($=n/[(v+1)(n-k+1)]$) is smaller than the approximation of $R_b$ which is
the approximation she achieved before.
\end{itemize}

\smallskip
\noindent
Ratio Defining Bidder (w.l.o.g.\ assume that she had been forced to share the top item):
\begin{itemize}
\item
There is no point in overbidding.
\item
By underbidding, one possible outcome is that this bidder shares the bottom item along with those who had already
been sharing it, which of course by definition is worse for her.
\item
By underbidding one of the {\sc Bottom} bidders this bidder ceases to be ratio defining. The interesting case is when
there is a new ratio defining bidder and she is forced to go up. In that case, the $\rho$ for that bidder ($vn/k(v+1)$)
is smaller than the approximation of the original ratio defining bidder and everyone suffers this same $\rho$.

\end{itemize}
\end{proof}

\begin{proof}[of Lemma~\ref{lem:equl-outcomes}]
Suppose not for a contradiction. W.l.o.g. suppose that some price is higher in the outcome of $\cal A'$ (if not, switch the
roles of $\cal A$ and $\cal A'$).

In the run of the mechanism $\cal A'$, consider a time when no item has exceeded the final price given by the mechanism $\cal A$ but some items have reached that price. Let $S$ be the set of items at this point that have prices equal to their final prices in $\cal A$.  We will show that no item in set $S$ will be part of the set $R$ from this time onwards in the mechanism $\cal A'$, and hence the prices of these items will not increase in $\cal A'$. Let $T'$ and $T$ be the set of bidders who have edges to some item in the set $S$ at the current prices in $\cal A'$, and at the final prices in $\cal A$, respectively. Clearly, $T'\subseteq T$. Also suppose a bidder $b \in T$ has an edge to an item outside set $S$ at the final prices of $\cal A$; this means that $b \not \in T'$ at the current prices as $b$ will strictly prefer the item outside the set $S$. Thus if $b \in T'$, this implies that $b$ has no edges to an item outside the set $S$ at the final prices given by $\cal A$, and so  in $\cal A$, $b$ was allocated to some item in set $S$. Since we know that at the final prices given by $\cal A$, all the bidders can be allocated to some item, this implies that $|T'| \leq c(S)$. Thus, even at current prices, all the bidders in $T'$ can be allocated to items in the set $S$, and hence no item in set $S$ can be part of the set $R$. Thus no item can have higher final price in $\cal A'$ than in $\cal A$. A contradiction.
\end{proof}

\section{Improved Mechanisms for Two Items}\label{improved_twoItem}

\subsection{Two Bidders}

We start by assuming that both bidders are interested in both items,
i.e.\ have non-zero valuation for both of them;
we touch on the other cases at the end of this section.

Just
as
we did for the many bidder case, we scale their valuations so
that
both
bidders
assign
a value of 1 to item $b$.
Then, let $u$ and $v$ be the
respective
valuations of Bidder A and Bidder B
for item $t$. W.l.o.g.\ we can assume that $u>1$ and $u>v$, which means that
Bidder $A$ will never be $R_b$ and Bidder $B$ will be $R_b$ only if $v>1$.
If no $R_b$ exists, then
the
PF allocation is clearly truthful, but when
Bidder $B$ is
the
Ratio Defining Bidder, our 2-bidder 2-item mechanism defines the
final allocation as a function of just $v$. More specifically, Bidder B gets a
fraction $b(v)=\frac{1}{v}$ of item $b$ and a fraction $t(v)=\frac{1}{2}-\frac{1}{2v^2}$
of item $t$. Finally, Bidder A is allocated all of item $t$ that is not allocated to
Bidder B. The intuition behind this mechanism is that, if Bidder B overbids regarding
her
valuation for item $t$, then she loses part of item $b$.

\begin{theorem}
\label{thm:2-item-2-bidder}
The 2-bidder, 2-item mechanism is truthful and achieves an approximation
factor of $2\cdot(\sqrt 2 - 1)\approx 0.828427$ of the PF objective.
\end{theorem}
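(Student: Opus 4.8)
The plan is to dispose of the trivial regime first and then prove the two claims (approximation and truthfulness) separately. Following the setup I scale so that both bidders value item $b$ at $1$, write $u$ for A's value and $v$ for B's value of item $t$, and assume $u>1$ and $u>v$. If $v\le 1$ there is no $R_b$, the mechanism simply outputs the (truthful) PF allocation and the ratio is $1$, so the entire argument reduces to the case $v>1$ in which B is the Ratio Defining Bidder and the mechanism uses the closed forms $b(v)=\frac1v$ and $t(v)=\frac12-\frac{1}{2v^2}$.

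For the approximation factor I would first pin down the PF benchmark. Running the unit-budget market as in Lemma~\ref{lem:TwoItems} with $n=2$ gives PF prices $p_t=\frac{2v}{v+1}$ and $p_b=\frac{2}{v+1}$, so A buys a fraction $\frac{v+1}{2v}$ of item $t$ while B takes the leftover slice of $t$ together with all of item $b$; hence $v_A(x_{PF})=u\cdot\frac{v+1}{2v}$ and $v_B(x_{PF})=\frac{v+1}{2}$. Plugging in the mechanism's allocation, B receives value $v\,t(v)+b(v)=\frac{v^2+1}{2v}$ and A receives $u\bigl(1-t(v)\bigr)=u\cdot\frac{v^2+1}{2v^2}$ from item $t$. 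The coincidence that drives the result is that both per-bidder ratios collapse to the same expression $\rho(v)=\frac{v^2+1}{v(v+1)}$, so $\rho(x)=\rho(v)$ exactly. I would then minimize $\rho$ over $v>1$: its numerator derivative vanishes where $v^2-2v-1=0$, i.e.\ $v=1+\sqrt2$, and substituting back yields $\rho=2(\sqrt2-1)\approx 0.828427$, the claimed factor.

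For truthfulness the heart is B's incentive as the Ratio Defining Bidder. With true value $v_0$ and report $v'$ keeping her in the $R_b$ regime $1<v'<u$, her utility is $U(v')=v_0\bigl(\frac12-\frac{1}{2v'^2}\bigr)+\frac{1}{v'}$, and the whole point of the formulas $t(v),b(v)$ is that $U'(v')=\frac{v_0-v'}{v'^3}$, so $U$ is maximized exactly at $v'=v_0$. This is the one genuinely designed step; everything else rules out deviations that change roles. I would check: (i) if B reports $v'\ge u$ she becomes the high bidder and gets $v_0\bigl(\frac12+\frac{1}{2u^2}\bigr)$, which is below $U(v_0)=\frac{v_0^2+1}{2v_0}$ precisely because $u>v_0$; and (ii) if she reports $v'\le 1$ she is allocated item $b$ alone, worth at most $1<U(v_0)$. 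Symmetrically for A, whose allocation depends only on $v$ while she remains the high bidder, so over-reporting or mild under-reporting changes nothing; the only dangerous deviation is reporting $1<u'<v$ to become $R_b$ herself, giving utility $u\bigl(\frac12-\frac{1}{2u'^2}\bigr)+\frac{1}{u'}$. Writing $g(u')=\frac{1}{u'}-\frac{u}{2u'^2}$, one computes $g'(u')=\frac{u-u'}{u'^3}>0$, so $g$ is increasing and $g(u')<g(v)=\frac1v-\frac{u}{2v^2}\le\frac{u}{2v^2}$ exactly when $v\le u$; this is the inequality that makes A's deviation unprofitable, while reporting $u'\le 1$ again earns only item $b$, worth less than the truthful value. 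I expect the main obstacle to be precisely this exhaustive role-change analysis: the first-order argument for the Ratio Defining Bidder is immediate, but one must verify that no swap of the high/low roles and no collapse of $R_b$ ever pays, and the tightest such case (A shading just below $v$) is exactly what forces the hypothesis $u>v$ to be used.
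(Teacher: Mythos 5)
Your proposal is correct and follows essentially the same route as the paper: the identical closed forms $t(v)=\frac12-\frac{1}{2v^2}$ and $b(v)=\frac1v$, the same first-order truthfulness check giving derivative $(v-\bar v)/\bar v^3$ for the ratio-defining bidder, and the same approximation ratio $(v^2+1)/(v^2+v)$ minimized at $v=1+\sqrt2$ to yield $2(\sqrt2-1)$. The additional work you do --- deriving the PF benchmark prices $p_t=\frac{2v}{v+1}$, $p_b=\frac{2}{v+1}$, verifying that both bidders receive the \emph{same} fraction of their PF utility, and exhaustively ruling out role-changing deviations (B reporting above $u$ or at most $1$, A shading into the $R_b$ regime via the monotonicity of $g(u')=\frac{1}{u'}-\frac{u}{2u'^2}$) --- is sound and in fact supplies case checks that the paper's terse proof leaves implicit, since the paper verifies only the interior first-order condition for Bidder B and computes only B's ratio.
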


\begin{proof}

Notice that the valuation of Bidder B
would equal $t(\bar{v})v+b(\bar{v})$ if her bid is $\bar{v}$,
and for the mechanism to be truthful we need $t(\bar{v})v+ b(\bar{v})\leq t(v)v+b(v)$ for all $\bar{v}$. It is easy to verify
that this 
holds
for our mechanism since the partial derivative of the left hand side w.r.t.\ $\bar{v}$ is equal to $\frac{v-\bar{v}}{\bar{v}^3}$,
and hence $v$ is the optimal solution for Bidder B.

Regarding
the approximation ratio, the  
PF allocation gives a fraction of $\frac{1}{2}-\frac{1}{2v}$ of item $t$ and all of item $b$
to Bidder B, 
yielding
an approximation factor of $\frac{v^2+1}{v^2+v}$,
which is minimized for $v=1+\sqrt{2}$, giving $\rho= 2 \cdot (\sqrt 2 - 1) \approx 0.828427$.
\end{proof}

\paragraph{Other cases.}
If just one bidder is interested in both items, we can view it as the previous case with $u= \infty$
for the bidder interested in only one item.
If the bidders are each interested in only one item, if these are distinct items, the bidders are each
allocated the item they want in full. If it is the same item, they each receive half of it.
In all these cases, $\rho \ge 2\cdot(\sqrt 2 - 1)$.

\subsection{Three Bidders}

\begin{theorem}
\label{thm:3bid-2item}
There is a truthful mechanism for 3 bidders and 2 items achieving an approximation factor of
$(12 -\sqrt{12})/11 \approx 0.77599$.
\end{theorem}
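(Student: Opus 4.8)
The plan is to extend the two-bidder mechanism of Theorem~\ref{thm:2-item-2-bidder} to three bidders. First I would rescale every bidder so that item $b$ is worth $1$, sort the bidders in decreasing order of their value $v$ for item $t$, and read off the Ratio Defining Bidder $R_b$ from the PF allocation of Lemma~\ref{lem:TwoItems}. Since $p_t+p_b=3$, the bidder $R_b$ now occupies position $k\in\{1,2,3\}$ of the ordering according as $v$ lies in $(0,\tfrac12)$, $(\tfrac12,2)$, or $(2,\infty)$, and the two boundary values $v=\tfrac12,2$ are precisely the integral-price configurations where no bidder needs a fraction of both items. In each case the mechanism gives $R_b$ an allocation $\bigl(t(v),b(v)\bigr)$ depending only on her own report $v$, and gives each of the remaining Top (resp.\ Bottom) bidders a share of the single item she demands, shrunk so that the guarantee is uniform across all bidders.

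The functions $t(\cdot),b(\cdot)$ are the crux of the construction, and I would obtain them exactly as for two bidders: impose the incentive first-order condition $v\,t'(v)+b'(v)=0$, so that $R_b$'s payoff $t(\bar v)\,v+b(\bar v)$ is stationary at the truthful report $\bar v=v$, and then fix the remaining integration constants from continuity at the boundaries $v=\tfrac12,2$ together with the truthfulness requirements below. The instance is symmetric under swapping the two items ($v\mapsto 1/v$, Top $\leftrightarrow$ Bottom), so in the middle case the solutions take the form $t(v)=\alpha-\beta/\sqrt v$, $b(v)=\alpha-\beta\sqrt v$, and the two outer cases $k=1,3$ are mirror images of each other, solved from the same equation with the appropriate boundary data.

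I would then prove truthfulness, which I expect to be the main obstacle. For $R_b$ the stationarity condition must be upgraded to a global maximum by a sign analysis of the derivative, exactly as in Theorem~\ref{thm:2-item-2-bidder}, using that over-reporting trades away item $b$ whereas under-reporting trades away item $t$. The genuinely new difficulty, absent with two bidders, is ruling out deviations large enough to change a bidder's \emph{position} in the ordering, and hence the identity of $R_b$: with a true middle case present, $R_b$ has both a Top and a Bottom neighbour, so manipulations in both directions must be excluded, along the lines of Lemma~\ref{lem:SIMtruth} but with more cases to check. These position-change constraints---rather than feasibility alone---are what cap the utility the mechanism can hand out, and they pin down the free constant in the allocation functions.

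Finally I would bound the approximation factor. In each case I would write $\rho$ as an explicit rational function of $v$, check that every bidder attains at least this fraction of her PF utility, and minimise over the admissible range. A short calculation shows that the two outer cases are not binding, so the worst configuration is the middle one; minimising there produces a critical point at $v=\tfrac{2+\sqrt3}{2}\in(\tfrac12,2)$, where the ratio equals
\[
\frac{3}{\,v+2\,}=\frac{6}{6+\sqrt3}=\frac{12-\sqrt{12}}{11}\approx 0.77599 .
\]
The minimisation itself is routine calculus once the allocation functions are known; the delicate point is engineering those functions so that they are at once truthful under position changes and worst-case optimal.
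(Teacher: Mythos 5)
Your scaffolding (case analysis by the position of $R_b$, allocation functions $t(v),b(v)$ of her report alone, the first-order condition $v\,t'(v)+b'(v)=0$, a uniform fraction $\rho$ for all bidders, then minimizing over $v$) is the paper's approach, but you have misidentified the binding case, and the claimed minimization does not hold up. In the paper, the middle case is \emph{not} the bottleneck: with the family $t(v)=\alpha-\beta/v^2$, $b(v)=2\beta/v-\delta$ and the optimal constants $\alpha=\tfrac35,\beta=\delta=\tfrac25$ (which are pinned down by the \emph{feasibility} constraints that $1-t(v)$ and $1-b(v)$ must still cover the Top and Bottom bidders' $\rho$-shares, Restrictions~(\ref{TopRestriction}) and~(\ref{BottomRestriction}) — not by position-change constraints as you suggest), the middle case achieves $\rho\approx 0.89$, with its critical point at $v=(2+\sqrt{14})/5$. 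The worst case is the \emph{outer} case, where an end bidder is ratio defining: there $v$ ranges over the unbounded interval $(2,\infty)$, the outer case is therefore not a ``mirror image'' of the bounded middle range $(\tfrac12,2)$, and the mechanism must be defined piecewise — $t(v)=\tfrac14-\tfrac1{v^2}$, $b(v)=\tfrac2v$ for $2\le v\le\sqrt{12}$, and an equal three-way split of the top item for $v>\sqrt{12}$. The theorem's constant arises exactly at the cap $v=\sqrt{12}$, where the ratio-defining bidder gets $t(v)v+b(v)=2/\sqrt3$ against a PF utility of $(v+1)/3$, giving $(12-\sqrt{12})/11$.

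Your middle-case claim is also internally inconsistent. With your symmetric ansatz $t(v)=\alpha-\beta/\sqrt v$, $b(v)=\alpha-\beta\sqrt v$, the ratio-defining bidder's utility is $\alpha(v+1)-2\beta\sqrt v$, so $\rho(v)=3\alpha-6\beta\sqrt v/(v+1)$; by the very $v\mapsto 1/v$ symmetry you invoke, this is minimized at $v=1$, not at $v=(2+\sqrt3)/2$. Moreover $3/(v+2)$ is strictly decreasing, so it cannot be the value of an interior minimum on $(\tfrac12,2)$; the identity $3/(v+2)=(12-\sqrt{12})/11$ at $v=(2+\sqrt3)/2$ appears reverse-engineered from the target constant rather than derived. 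To repair the argument you would need to (i) carry out the feasibility analysis that fixes the constants and prove the middle case attains $\approx 0.89$, (ii) construct the piecewise outer-case mechanism with the cap at $v=\sqrt{12}$ and verify truthfulness across its pieces, and (iii) justify (as the paper does via an $\epsilon$-perturbation of valuations) that all bidders must receive the same fraction $\rho$, which you assert without proof.
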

\begin{proof}
Our goal is to find two functions $t(v)$ and $b(v)$ defining the fractions of the top and bottom items respectively that $R_b$
will be assigned as a function of her valuation $v$ for the top item.
Before we move on to define such functions, we note that if they are functions of $v$ alone,
and if $R_b$ obtains a fraction $\rho$ of her utility in the
PF solution, then every other
bidder will have to obtain exactly the same fraction $\rho$ of her PF utility.
Were this not the case, we could easily construct an example where a bidder that gets a different approximation
has valuation $v-\epsilon$ or $v+\epsilon$ (i.e.\ she is practically the same bidder as $R_b$).
In that case, the bidder
with the worse approximation would have the option of changing her bid by at most $2\epsilon$
and securing the better approximation factor.
Therefore every bidder
must be offered the same fraction $\rho$ of her PF allocation utility.

Now we consider the case when the middle bidder is $R_b$.
We assume w.l.o.g.\ that this bidder prefers the top item ($v\geq1$).
Then, by Equation~(\ref{valueRange}), $v\in [1,2)$.
Following the approach used for the case of two bidders, we consider the following families of functions:
\[t(v)=\alpha - \beta \frac{1}{v^2}~~~ \text{ and } ~~~b(v)=\gamma \frac{1}{v}-\delta,\]
where $\alpha, \beta, \gamma$ and $\delta$ are non-negative constants
which we will choose so as to maximize $\rho$ while ensuring both truthfulness and
that the resulting solution can be allocated.
More specifically, for $R_b$ to be truthful, the utility function $t(\bar{v})v+b(\bar{v})$ of $R_b$ has to be maximized at $\bar{v}=v$,
i.e.\ it is a necessary condition that:
\begin{equation}
 t'(v)v+b'(v)=0 ~~\Rightarrow ~~ \beta\frac{2}{v^3}v - \gamma \frac{1}{v^2}=0 ~~ \Rightarrow ~~ 2\beta = \gamma .
\end{equation}

Using the fact that $2\beta =\gamma$, we can now reduce the number of constants to three by replacing $\gamma$.
Then, $\rho$, the approximation factor, will, by definition, be equal to:
\begin{equation}\label{rho}
 \rho = \frac{t(v)v+b(v)}{\frac{1}{3}(v+1)} ~~~ \Rightarrow ~~~ \rho=\frac{3(\alpha v^2-\delta v+\beta)}{v^2+v} .
\end{equation}
Given $t(v)$ and $b(v)$, the fraction of the top item that remains for the {\sc Top} bidder is $1-t(v)$
and the fraction of the bottom item that remains for the {\sc Bottom} bidder is $1-b(v)$.
We need to make sure that these remaining fractions suffice to provide
both the {\sc Top} and the {\sc Bottom} bidders with a $\rho$-approximation
of their PF utilities.
These restrictions translate to:
\begin{align}
 \rho \frac{v+1}{3v} \le 1-t(v)~~~ & \Rightarrow ~~~ t(v)\leq \frac{1}{2}-\frac{b(v)}{2v}  ~~~ & \Rightarrow ~~~ (2\alpha-1)v-\delta \leq 0 \label{TopRestriction}\\
\rho \frac{v+1}{3} \le 1 -b(v)~~~ & \Rightarrow ~~~b(v)\leq \frac{1}{2}-\frac{t(v)v}{2}  ~~~ & \Rightarrow ~~~ \alpha v^2 - (2\delta +1)v+3 \beta \leq 0 .\label{BottomRestriction}
\end{align}


We choose $\alpha = \frac 12 + \frac 14 \delta$, which makes Restriction~(\ref{TopRestriction}) tight.
We choose $\beta = \delta$ which is the smallest value of $\beta$ for which $b(v) \ge 0$ always.
Substituting in Restriction~(\ref{BottomRestriction}) yields:
\begin{equation}\label{DeltaConstraint}
\left(\frac 12 v^2 - v\right) + \delta\left(\frac 14 v^2 - 2v +3\right) \le 0.
\end{equation}
Note that this constraint is tight at $v=2$.
The derivative of the left hand side is $v-1 + \delta (v/2 -2)$, which is negative at $v=1$.
Thus, so long as the constraint is satisfied at $v=1$, it is satisfied for all $v\in [1,2]$.
For this, it suffices that $\frac 54 \delta \le \frac 12$, i.e.\ that $\delta \le \frac 25$.

We choose $\delta = \frac 25$, and then $\beta =\frac 25$ and $\alpha = \frac 35$.
Substituting in Equation~(\ref{rho}) gives:
\begin{equation}\label{rho-with-values}
 \rho =\frac{3\left(\frac 35 v^2-\frac 25 v+\frac 25\right)}{v^2+v} .
\end{equation}
It is a simple matter to check that the derivative of $\rho$ is zero at $(2 + \sqrt{14})/5$
when $\rho \approx 0.89$.

Next, we argue that this is the best choice of parameters.
Consider the choice of parameters $\alpha = \frac 12 + \frac 14 \delta + \alpha'$, where
$\alpha' \le 0$ (Restriction~(\ref{TopRestriction}) at $v=2$ forces $\alpha'\le 0$),
$\beta = \delta + \beta'$ (again, $b(2) \ge 0$ and $\gamma = 2\beta$ forces $\beta' \ge 0$),
and $\delta = \frac 25 + \delta'$ (again, $\delta' \le 0$).
Substituting into Restriction~(\ref{BottomRestriction}) with $v=1$ yields:
\begin{equation}
\label{eqn:other-param}
\alpha' + 3 \beta' - 2 \delta'  \le 0.
\end{equation}
Thus any change to the term $\alpha^2 -\delta v + \beta$ in $\rho$ is
negative: any increase due to $\beta'$ must be offset by a decrease of at least
$3\beta'v^2$ due to $\alpha'$ and any increase due to $-\delta' v$ is similarly
offset by a decrease of $\frac94 \delta' v^2 + \delta'$ due to $\alpha'$, $\alpha$ and $\beta$.

The above scheme handles the case that $R_b$ is the middle bidder.
Otherwise, w.l.o.g., suppose that the bottom bidder is $R_b$
(if not, just switch the roles of the bottom and top items).
Then for $v\le 2$, the bottom bidder is assigned the bottom item, and the other two
bidders each receive half the top item.
For $2 \le v \le \sqrt{12}$,
$t(v) = \frac 14 - \frac{1}{v^2}$ and $b(v) = \frac{2}{v}$,
and the other two bidders each receive $\frac 14 + \frac{1}{v^2}$
of the top item.
For $ v > \sqrt{12}$, all three bidders receive one third of the top item.
The worst approximation factor occurs at $v= \sqrt{12}$ and it has value $(12 -\sqrt{12})/11 \approx 0.77599$.

\end{proof}

\end{document}